\documentclass[draftclsnofoot,onecolumn,12pt]{IEEEtran}

\usepackage{amsmath,amssymb}
\usepackage{braket}
\usepackage{xcolor}
\usepackage{graphicx}
\usepackage{siunitx}
\usepackage{cite}
\usepackage{tikz}
\usepackage{booktabs}
\usepackage{tabularx}
\usepackage{array}
\usetikzlibrary{arrows.meta,positioning,calc,fit,shapes.geometric}

\newcommand{\Heff}{H_{\mathrm{eff}}}
\newcommand{\Hzero}{H_{0}}
\newcommand{\Hplus}{H_{+}}
\newcommand{\Hminus}{H_{-}}
\newcommand{\Hatom}{H_{\mathrm{atom}}}

\newcommand{\SNR}{\mathrm{SNR}}
\newcommand{\RMSE}{\mathrm{RMSE}}

\newcommand{\OmTwoThree}{\Omega_{23}}
\newcommand{\OmTwoFour}{\Omega_{24}}
\newcommand{\OmThreeFour}{\Omega_{34}}

\newtheorem{remark}{Remark}
\newtheorem{proposition}{Proposition}

\begin{document}

\title{LO-Free Phase and Amplitude Recovery of an RF Signal with a DC-Stark-Enabled Rydberg Receiver}

\author{Vladislav Katkov and Nikola Zlatanov
\thanks{The authors are with Innopolis University, Innopolis, 420500, Russia (e-mails: \texttt{v.katkov@innopolis.university; n.zlatanov@innopolis.ru}).}%
}

\maketitle

\begin{abstract}
We present a theoretical framework for recovering the amplitude and carrier phase of a single received RF field with a Rydberg-atom receiver, without injecting an RF local oscillator (LO) into the atoms. The key enabling mechanism is a static DC bias applied to the vapor cell: by Stark-mixing a near-degenerate Rydberg pair, the bias activates an otherwise absent upper optical pathway and closes a phase-sensitive loop within a receiver driven only by the standard probe/coupling pair and the received RF field. For a spatially uniform bias, we derive an effective four-level rotating-frame Hamiltonian of Floquet form and show that the periodic steady state obeys an exact harmonic phase law, so that the $n$th probe harmonic carries the factor $e^{in\Phi_S}$. This yields direct estimators for the signal phase and amplitude from a demodulated probe harmonic, with amplitude recovery obtained by inverting an injective harmonic response map. In the high-SNR regime, we derive explicit RMSE laws and use them to identify distinct phase-optimal and amplitude-optimal bias-controlled mixing angles, together with a weighted joint-design criterion and a balanced compromise angle that equalizes the fractional phase and amplitude penalties. We then extend the analysis to nonuniform DC bias through quasistatic spatial averaging and show that bias inhomogeneity reduces coherent gain for phase readout while also reshaping the amplitude-response slope. Numerical examples validate the phase law, illustrate response-map inversion and mixing-angle trade-offs, and quantify the penalties induced by bias nonuniformity. The results establish a minimal route to coherent Rydberg reception of a single RF signal without an auxiliary RF LO in the atoms.
\end{abstract}

\section{Introduction}

Rydberg-atom electrometry based on ladder electromagnetically induced transparency (EIT) provides a well-established route to measuring radio-frequency electric fields in vapor cells and has developed into a versatile platform for atom-based receivers and demodulators \cite{Sedlacek2012NatPhys,Simons2021MeasSens}. In many receiver applications, however, the objective is not merely field magnitude but the amplitude and carrier phase of a single received RF field. In this paper the received signal is
\begin{equation}
\mathbf{E}_S(t)=A_S\cos(\omega_S t+\Phi_S)\,\mathbf{e}_S,
\label{eq:signal_field_intro}
\end{equation}
where $A_S$ and $\Phi_S$ are unknown, while $\omega_S$ is assumed known or separately tracked for synchronous demodulation. Recovering both quantities is attractive for coherent communications, interferometric ranging, and direction-finding architectures. Standard EIT and Autler--Townes readout, however, are primarily amplitude sensitive and do not by themselves supply an internal phase reference for a single unknown RF tone.

Existing Rydberg receiver work spans a broader landscape than phase-sensitive schemes alone. On the communications and demodulation side, room-temperature Rydberg sensors have already been used for digital communication with amplitude-modulated microwave fields \cite{Meyer2018APL_DigitalComm}, continuously tunable RF carriers \cite{Song2019OEx_TunableCarrier}, amplitude-modulated baseband reception \cite{Jiao2019APEx_AMBaseband}, AM/FM radio reception \cite{Anderson2021TAP_AMFM}, multi-band and stereo reception \cite{Holloway2021APMag_Multiband}, and simultaneous multiband demodulation \cite{Meyer2023PRApplied_Multiband}. More recent work extends this communications agenda to quadrature-amplitude modulation \cite{Nowosielski2024OEx_QAM}, millimeter-wave atomic reception \cite{Legaie2024AVS_MMWReceiver}, and continuous-broadband reception based on AC-Stark tuning and Floquet sidebands \cite{Song2024APL_ContinuousBroadband}. These papers establish the breadth of atom-based reception, but they do not address the problem considered here: LO-free recovery of the carrier phase and amplitude of a single received RF field using an internally closed phase-sensitive loop for which the received field itself is the only time-varying RF tone driving the atoms.

Within the narrower class of phase-sensitive Rydberg schemes, the required reference is usually supplied in one of two ways. One class injects an RF local oscillator (LO) into the atoms, as in atomic-mixer phase measurement \cite{Simons2019APL_Mixer}, phase-modulated-signal reception \cite{Holloway2019LAWP_PhaseMod}, antenna-integrated phase- and amplitude-resolved reception \cite{Simons2019IEEEAccess_Antenna}, superheterodyne sensing \cite{Jing2020NatPhys_Superhet}, and high-sensitivity phase-modulation reception for frequency-division-multiplexing communication \cite{Cai2023PRApplied_PMReceiver}. Another class avoids an RF LO in the atoms but introduces additional time-varying optical or electrical reference structure, including interacting-dark-state phase proposals \cite{Lin2022AO_DarkStatePhase}, internal-state interferometry \cite{Anderson2022PRApplied_Interferometer}, closed-loop quantum interferometry \cite{Berweger2023PRApplied_ClosedLoop}, all-optical phase detection \cite{Schmidt2025PRL_AllOpticalPhase}, and optically biased hybrid nonlinear interferometry \cite{Borowka2025NatCommun_OpticalBias}. The general fractured-loop framework of Kasza \emph{et al.} \cite{Kasza2025PRA_FracturedLoops} clarifies how such multi-tone receivers can be modeled, but it does not identify a static DC field as the sole loop-closing resource. The distinguishing feature of the present work is therefore not phase sensitivity alone, but the specific way it is achieved: a \emph{static} DC bias Stark-mixes a near-degenerate Rydberg pair, activates the missing upper optical leg, and enables joint carrier-phase and amplitude recovery without an RF LO in the atoms and without any additional time-varying optical or RF reference that must be phase-related to the received signal inside the atomic medium.

This paper develops such a route. The central idea is to use a static DC bias field to Stark-mix a near-degenerate upper Rydberg pair. In the dressed basis, that mixing activates an otherwise absent upper optical pathway and closes a phase-sensitive loop, while the receiver as a whole is still driven only by the probe field, the coupling field, and the received RF signal itself. In a co-rotating description, the loop carries the phase combination $(\omega_S t+\Phi_S)$, so the long-time state becomes periodic at $\omega_S$. The probe harmonics therefore acquire an exact phase dependence $e^{in\Phi_S}$, which makes the carrier phase accessible from the argument of a demodulated harmonic, while the signal amplitude is recovered by inverting a harmonic response map. Figure~\ref{fig:intro_concept} summarizes the mechanism. Throughout the paper, ``LO-free'' means that no auxiliary \emph{RF} local oscillator is injected into the atoms; ordinary optical/electronic synchronous demodulation outside the vapor cell is still used to extract the complex probe harmonic at the known carrier frequency.

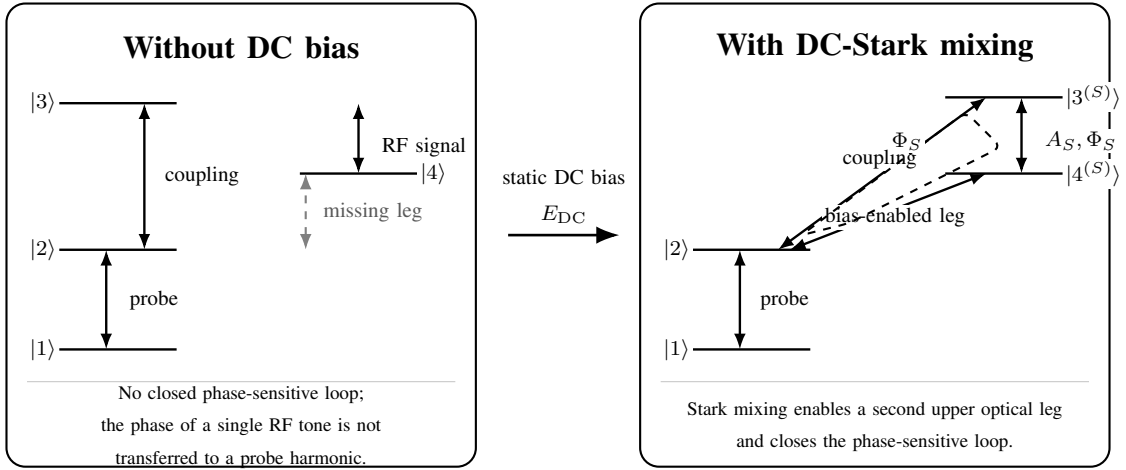
\begin{figure*}[t]
\centering
\begin{tikzpicture}[
    x=0.78cm,y=0.78cm,
    level/.style={line width=0.9pt},
    trans/.style={{Latex[length=2.1mm]}-{Latex[length=2.1mm]}, line width=0.9pt},
    disabled/.style={{Latex[length=2.1mm]}-{Latex[length=2.1mm]}, dashed, gray, line width=0.8pt},
    panel/.style={draw, rounded corners=0.25cm, line width=0.8pt},
    ttl/.style={font=\normalsize\bfseries},
    lab/.style={font=\scriptsize, fill=white, inner sep=1pt},
    softlab/.style={font=\scriptsize, text=gray!70!black, fill=white, inner sep=1pt},
    note/.style={font=\fontsize{7.2}{8.2}\selectfont, align=center}
]

\draw[panel] (0,0) rectangle (8.0,7.9);
\node[ttl] at (4.0,7.15) {Without DC bias};

\draw[gray!45] (0.35,1.45) -- (7.65,1.45);

\draw[level] (0.9,2.0) -- (2.9,2.0);
\draw[level] (0.9,3.7) -- (2.9,3.7);
\draw[level] (0.9,6.2) -- (2.9,6.2);
\draw[level] (5.0,5.0) -- (7.0,5.0);

\node[anchor=east,lab] at (0.9,2.0) {$\ket{1}$};
\node[anchor=east,lab] at (0.9,3.7) {$\ket{2}$};
\node[anchor=east,lab] at (0.9,6.2) {$\ket{3}$};
\node[anchor=west,lab] at (7.0,5.0) {$\ket{4}$};

\draw[trans] (1.7,2.0) -- (1.7,3.7);
\draw[trans] (2.35,3.7) -- (2.35,6.2);
\draw[trans] (6.0,5.0) -- (6.0,6.2);
\draw[disabled] (5.1,3.7) -- (5.1,5.0);

\node[lab,anchor=west] at (2.05,2.85) {probe};
\node[lab,anchor=west] at (2.65,4.95) {coupling};
\node[lab,anchor=west] at (6.35,5.45) {RF signal};
\node[softlab,anchor=west] at (5.35,4.35) {missing leg};

\node[note,text width=7.0cm] at (4.0,0.68)
{No closed phase-sensitive loop;\\
the phase of a single RF tone is not transferred to a probe harmonic.};

\draw[-{Latex[length=3.1mm]}, line width=1.0pt]
    (8.55,3.95) -- (10.45,3.95)
    node[midway,above=3pt,lab,align=center]
    {static DC bias\\$E_{\mathrm{DC}}$};

\draw[panel] (10.8,0) rectangle (18.8,7.9);
\node[ttl] at (14.8,7.15) {With DC-Stark mixing};

\draw[gray!45] (11.15,1.45) -- (18.45,1.45);

\draw[level] (11.7,2.0) -- (13.7,2.0);
\draw[level] (11.7,3.7) -- (13.7,3.7);
\draw[level] (16.0,6.3) -- (18.0,6.3);
\draw[level] (16.0,5.0) -- (18.0,5.0);

\node[anchor=east,lab] at (11.7,2.0) {$\ket{1}$};
\node[anchor=east,lab] at (11.7,3.7) {$\ket{2}$};
\node[anchor=west,lab] at (18.0,6.3) {$\ket{3^{(S)}}$};
\node[anchor=west,lab] at (18.0,5.0) {$\ket{4^{(S)}}$};

\draw[trans] (12.5,2.0) -- (12.5,3.7);
\draw[trans] (13.15,3.7) -- (16.7,6.3);
\draw[trans] (13.35,3.7) -- (16.7,5.0);
\draw[trans] (17.3,5.0) -- (17.3,6.3);

\node[lab,anchor=west] at (12.8,2.85) {probe};
\node[lab,anchor=west] at (14.2,5.25) {coupling};
\node[lab,anchor=west] at (13.9,4.25) {bias-enabled leg};
\node[lab,anchor=west] at (17.65,5.55) {$A_S,\Phi_S$};

\draw[dashed, rounded corners=3pt, line width=0.8pt]
    (13.55,3.95) -- (16.35,6.05) -- (16.95,5.45) -- (14.35,4.15) -- cycle;
\node[lab] at (15.35,5.55) {$\Phi_S$};

\node[note,text width=7.0cm] at (14.8,0.68)
{Stark mixing enables a second upper optical leg\\
and closes the phase-sensitive loop.};

\end{tikzpicture}
\caption{Conceptual mechanism of the proposed receiver. Left: without a DC bias, the standard ladder-EIT configuration plus the received RF coupling does not supply the additional upper optical pathway needed for a phase-sensitive loop. Right: a static DC bias Stark-mixes the upper pair, activates a bias-enabled optical leg, and closes a loop within a receiver driven only by the probe, coupling, and received RF fields. The resulting periodic probe response carries the phase of the received signal in its harmonics.}
\label{fig:intro_concept}
\end{figure*}

The paper makes four concrete contributions:
\begin{enumerate}
\item It introduces an LO-free reception mechanism for a single RF signal in which a static DC bias closes the phase-sensitive loop through Stark mixing.
\item It derives a reduced four-level rotating-frame model of Floquet form and establishes, under the uniform-bias benchmark, an exact harmonic phase law for the periodic steady state.
\item It develops phase and amplitude estimators from a demodulated probe harmonic, with amplitude recovery posed as inversion of an injective harmonic response map.
\item It derives high-SNR RMSE laws, identifies phase-optimal, amplitude-optimal, weighted joint-design, and balanced-compromise mixing angles, and quantifies the impact of DC-bias nonuniformity through spatial averaging.
\end{enumerate}

The treatment in this paper is theoretical: we first analyze the spatially uniform DC-bias benchmark, for which the reduced periodically driven model is exact and the estimation problem is posed most cleanly, and then treat bias nonuniformity as a quasistatic spatial-average perturbation of that benchmark. 

The remainder of the paper is organized as follows. Section~\ref{sec:model} introduces the reduced model and the signal-estimation problem; Section~\ref{sec:hamiltonian} derives the rotating-frame Hamiltonian; Section~\ref{sec:pss} establishes the periodic steady state and harmonic phase law; Sections~\ref{sec:estimators} and \ref{sec:performance} develop the estimators and high-SNR performance laws; Section~\ref{sec:bias_cases} treats uniform and nonuniform bias; Section~\ref{sec:numerics} presents numerical examples; and Sections~\ref{sec:discussion} and \ref{sec:conclusion} discuss scope and conclusions. Technical derivations are collected in the appendices. Species-specific level selection and full-manifold optimization are left for future implementation work.

\section{Reduced receiver model and estimation target}
\label{sec:model}

This section introduces the minimal reduced receiver model used throughout the paper and states the corresponding estimation target. The main purpose is to define the field and basis conventions, to distinguish clearly between the physical RF amplitude $A_S$, the underlying signal Rabi scale $\Omega_S$, and the effective dressed-basis coupling $\OmThreeFour$, and to formulate the problem first under a spatially uniform DC-bias benchmark.

\subsection{Fields, basis states, and conventions}
\label{subsec:fields_basis}

We consider a ladder-EIT vapor-cell receiver driven by a static bias field, a probe field, a coupling field, and the received RF signal
\begin{align}
\mathbf{E}_{\mathrm{DC}}(\mathbf{r}) &= E_z(\mathbf{r})\,\mathbf{e}_z, \nonumber\\
\mathbf{E}_{p}(t) &= A_p\cos(\omega_p t)\,\mathbf{e}_p, \nonumber\\
\mathbf{E}_{c}(t) &= A_c\cos(\omega_c t)\,\mathbf{e}_c, \nonumber\\
\mathbf{E}_{S}(t) &= A_S\cos(\omega_S t+\Phi_S)\,\mathbf{e}_S.
\label{eq:fields}
\end{align}
The probe and coupling amplitudes and frequencies $(A_p,\omega_p)$ and $(A_c,\omega_c)$ are assumed known, and the carrier frequency $\omega_S$ is assumed known or separately tracked for synchronous demodulation. The unknown quantities to be estimated are the received-signal amplitude $A_S$ and carrier phase $\Phi_S$.

The constructive model starts from four field-free states $\ket{1}$, $\ket{2}$, $\ket{3}$, and $\ket{4}$. Before the DC bias is applied, the probe laser addresses $\ket{1}\leftrightarrow\ket{2}$, the coupling laser addresses $\ket{2}\leftrightarrow\ket{3}$, and the upper bare pair $\{\ket{3},\ket{4}\}$ is the pair that will later be mixed by the static bias field. At fixed bias, diagonalization of that upper $2\times2$ block produces two Stark states, denoted by $\ket{3^{(S)}}$ and $\ket{4^{(S)}}$. The reduced receiver basis used in the main text is therefore
\[
\{\ket{1},\ket{2},\ket{3^{(S)}},\ket{4^{(S)}}\}.
\]
Throughout the paper, $\ket{3}$ and $\ket{4}$ refer to the bare upper states before bias mixing, whereas $\ket{3^{(S)}}$ and $\ket{4^{(S)}}$ refer to the corresponding Stark states after mixing.

Using the polarization-projected dipole matrix elements
\[
\mu_{12}^{p}=\bra{1}\hat{\boldsymbol{\mu}}\!\cdot\!\mathbf{e}_p\ket{2},\qquad
\mu_{23}^{c}=\bra{2}\hat{\boldsymbol{\mu}}\!\cdot\!\mathbf{e}_c\ket{3},\qquad
\mu_{34}^{S}=\bra{3}\hat{\boldsymbol{\mu}}\!\cdot\!\mathbf{e}_S\ket{4},
\]

\begin{equation}
\Omega_p=\frac{|\mu_{12}^{p}|A_p}{2\hbar},\qquad
\Omega_c=\frac{|\mu_{23}^{c}|A_c}{2\hbar},\qquad
\Omega_S=\frac{|\mu_{34}^{S}|A_S}{2\hbar}.
\label{eq:rabi_scales}
\end{equation}
The factor $1/2$ appears because the real fields in (\ref{eq:fields}) are written as cosines, while only the positive-frequency components enter the rotating-wave Hamiltonian.

The quantity $\Omega_S$ is the underlying signal-strength parameter associated directly with the received RF amplitude $A_S$. After Stark mixing, however, the coupling that enters the reduced Hamiltonian is not $\Omega_S$ itself but an effective upper-state coupling $\OmThreeFour$, introduced below. We formulate the estimation problem in terms of $(\Omega_S,\Phi_S)$ because $\Omega_S$ remains directly proportional to the physical field amplitude $A_S$ through (\ref{eq:rabi_scales}).

\subsection{DC-Stark-induced effective couplings}
\label{subsec:minimal_realization}

We use a minimal two-state Stark model in which the DC bias mixes only the near-degenerate upper bare pair $\{\ket{3},\ket{4}\}$, while the lower states $\ket{1}$ and $\ket{2}$ are left unchanged. This model is intentionally minimal: its purpose is to expose the loop-closure mechanism analytically, not to provide a species-specific manifold calculation. The formal diagonalization of the Stark block is given in Section~\ref{sec:hamiltonian}; here we introduce the resulting effective couplings.

If the coupling laser drives only the bare transition $\ket{2}\leftrightarrow\ket{3}$ before mixing, then the same optical tone drives two transitions in the Stark basis:
\begin{equation}
\OmTwoThree=\Omega_c\cos\theta,\qquad
\OmTwoFour=\Omega_c\sin\theta,
\label{eq:effective_couplings}
\end{equation}
where $\theta$ is the bias-controlled Stark-mixing angle. The second coupling $\ket{2}\leftrightarrow\ket{4^{(S)}}$ is absent when $\theta=0$ and is the bias-enabled optical leg that closes the phase-sensitive loop.

The received RF signal couples the Stark states with effective strength
\begin{equation}
\OmThreeFour=\Omega_S\cos(2\theta)
\label{eq:effective_signal_coupling}
\end{equation}
in the real-coupling convention used throughout the paper. Constant dipole phases and overall signs are absorbed into the basis conventions, so that $\OmTwoThree$, $\OmTwoFour$, and $\OmThreeFour$ may be taken real in the minimal model. The hierarchy of signal-strength quantities is therefore
\begin{equation}
A_S
\;\longrightarrow\;
\Omega_S=\frac{|\mu_{34}^{S}|A_S}{2\hbar}
\;\longrightarrow\;
\OmThreeFour=\Omega_S\cos(2\theta).
\label{eq:signal_hierarchy}
\end{equation}
The first arrow converts the physical RF amplitude into the underlying Rabi scale associated with the received field; the second arrow incorporates DC-Stark mixing and yields the effective dressed-basis coupling that appears in the Hamiltonian.

These relations also make the central design trade-off visible already at the model level. Increasing $\theta$ turns on the bias-enabled optical leg $\OmTwoFour$, which is required for loop closure, but it simultaneously reduces the effective RF coupling $\OmThreeFour$ through the factor $\cos(2\theta)$. This competition is the origin of the distinct phase- and amplitude-optimal mixing angles derived later.

The general theory developed below does not require the specific trigonometric forms in (\ref{eq:effective_couplings}) and (\ref{eq:effective_signal_coupling}); it requires only that the reduced Stark-basis description be characterized by well-defined effective couplings $\OmTwoThree$, $\OmTwoFour$, and $\OmThreeFour$. The trigonometric relations belong to the minimal constructive model and are used mainly to expose the operating-point trade-offs controlled by the DC bias.

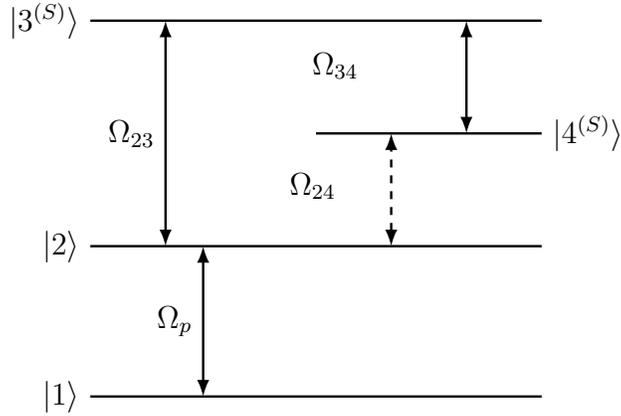
\begin{figure}[t]
\centering
\begin{tikzpicture}[
    level/.style={line width=0.9pt},
    trans/.style={{Latex[length=2.2mm]}-{Latex[length=2.2mm]}, line width=0.9pt},
    dashedtrans/.style={{Latex[length=2.2mm]}-{Latex[length=2.2mm]}, dashed, line width=0.9pt},
]

\draw[level] (0,0) -- (6,0);
\draw[level] (0,2) -- (6,2);
\draw[level] (0,5) -- (6,5);
\draw[level] (3,3.5) -- (6,3.5);

\node[anchor=east] at (0,0) {$\ket{1}$};
\node[anchor=east] at (0,2) {$\ket{2}$};
\node[anchor=east] at (0,5) {$\ket{3^{(S)}}$};
\node[anchor=west] at (6,3.5) {$\ket{4^{(S)}}$};

\draw[trans]       (1.5,0.0) -- (1.5,2.0);
\draw[trans]       (1.0,2.0) -- (1.0,5);
\draw[dashedtrans] (4,2.0) -- (4,3.5);
\draw[trans]       (5,3.5) -- (5,5);

\node[anchor=east] at (1.5,1.0) {$\Omega_p$};
\node[anchor=east] at (1,3.5) {$\OmTwoThree$};
\node[anchor=west] at (2.5,2.8) {$\OmTwoFour$};
\node[anchor=west] at (2.8,4.4) {$\OmThreeFour$};

\end{tikzpicture}
\caption{Reduced receiver model in the Stark basis. A DC bias mixes the upper bare pair so that the coupling tone drives both $\ket{2}\leftrightarrow\ket{3^{(S)}}$ and the bias-enabled leg $\ket{2}\leftrightarrow\ket{4^{(S)}}$, while the received RF field couples the Stark states. In the minimal constructive model, $\OmTwoThree=\Omega_c\cos\theta$, $\OmTwoFour=\Omega_c\sin\theta$, and $\OmThreeFour=\Omega_S\cos(2\theta)$.}
\label{fig:level_scheme}
\end{figure}

In the co-rotating description derived in Section~\ref{sec:hamiltonian}, a gauge choice places the explicit factor $e^{\pm i(\omega_S t+\Phi_S)}$ on the bias-enabled optical leg rather than on the physical RF-driven upper transition. This is only a representation choice: the received RF field still physically couples $\ket{3^{(S)}}$ and $\ket{4^{(S)}}$, and only the total loop phase is observable.

\subsection{Uniform-bias benchmark and estimation target}
\label{subsec:problem_statement}

The main theory in Sections~\ref{sec:hamiltonian}--\ref{sec:performance} is developed for the spatially uniform-bias benchmark
\begin{equation}
E_z(\mathbf{r})\equiv E_z.
\label{eq:uniform_bias}
\end{equation}
In the weak-probe regime, the optical readout is governed by the probe coherence
\[
\rho_{21}(t,\Phi_S)\equiv \bra{2}\rho(t,\Phi_S)\ket{1},
\]
so its harmonic content is the central estimation observable. Section~\ref{sec:estimators} later relates $\rho_{21}$ directly to the measured probe response.

Under the uniform-bias assumption, the reduced Hamiltonian derived in Section~\ref{sec:hamiltonian} has the Floquet form
\begin{equation}
\Heff(t,\Phi_S)=\Hzero+\Hplus e^{i(\omega_S t+\Phi_S)}+\Hminus e^{-i(\omega_S t+\Phi_S)},
\label{eq:floquet_form}
\end{equation}
and the probe coherence admits the harmonic expansion
\begin{equation}
\rho_{21}(t,\Phi_S)=\sum_{n\in\mathbb{Z}} \rho_{21}^{(n)}(\Phi_S)\,e^{in\omega_S t}.
\label{eq:rho21_harmonics}
\end{equation}
The estimation problem is therefore to recover $(A_S,\Phi_S)$, or equivalently $(\Omega_S,\Phi_S)$ through (\ref{eq:rabi_scales}), from one or more harmonics of $\rho_{21}(t,\Phi_S)$. We formulate amplitude recovery in terms of $\Omega_S$ rather than $\OmThreeFour$ because $\Omega_S$ is the quantity that maps directly back to the physical RF amplitude $A_S$.

Section~\ref{sec:bias_cases} relaxes (\ref{eq:uniform_bias}) and treats the more general case $E_z(\mathbf{r})=E_0+\delta E(\mathbf{r})$ as a quasistatic spatial average of local uniform-bias responses.

\section{Rotating-frame Hamiltonian under uniform bias}
\label{sec:hamiltonian}

We now convert the conceptual reduced receiver model into the periodically driven Hamiltonian used in the Floquet analysis. The derivation assumes the uniform-bias benchmark in (\ref{eq:uniform_bias}) and the minimal two-state Stark model in which the upper pair $\{\ket{3},\ket{4}\}$ is near-degenerate and sufficiently isolated that mixing with additional nearby Rydberg states can be neglected at leading order.

\subsection{DC Stark mixing of the upper bare pair}
\label{subsec:stark_mixing}

In the bare basis $\{\ket{1},\ket{2},\ket{3},\ket{4}\}$ the field-free Hamiltonian is
\begin{equation}
\Hatom=\hbar\sum_{n=1}^{4}\omega_n\ket{n}\!\bra{n}.
\label{eq:Hatom}
\end{equation}
A static bias field $\mathbf{E}_{\mathrm{DC}}=E_z\,\mathbf{e}_z$ adds the Stark interaction
\begin{equation}
H_0'=\Hatom-\hat{\mu}_z E_z,\qquad \hat{\mu}_z=\hat{\boldsymbol{\mu}}\!\cdot\!\mathbf{e}_z.
\label{eq:stark_hamiltonian}
\end{equation}
For field-free eigenstates of definite parity, the diagonal matrix elements vanish, $\bra{n}\hat{\mu}_z\ket{n}=0$ \cite{Gallagher1994Rydberg}. In the minimal model, only the upper bare pair is mixed, so the relevant Stark block is
\begin{equation}
H'_{34}=
\begin{bmatrix}
\hbar\omega_3 & -E_z\mu_{34}^{z} \\
-E_z\mu_{43}^{z} & \hbar\omega_4
\end{bmatrix},
\qquad
\mu_{43}^{z}=\mu_{34}^{z*}.
\label{eq:H34_block}
\end{equation}
Choosing phases so that $\mu_{34}^{z}$ is real and positive, diagonalization yields the Stark states
\begin{align}
\ket{3^{(S)}} &= \cos\theta\,\ket{3}+\sin\theta\,\ket{4}, \nonumber\\
\ket{4^{(S)}} &= -\sin\theta\,\ket{3}+\cos\theta\,\ket{4},
\label{eq:stark_states}
\end{align}
with bias-controlled Stark-mixing angle
\begin{equation}
\tan(2\theta)=\beta,\qquad
\beta\equiv \frac{2E_z|\mu_{34}^{z}|}{\hbar\Delta_{34}},\qquad
\Delta_{34}\equiv \omega_3-\omega_4>0,
\label{eq:theta_def}
\end{equation}
where $0\le \theta\le \pi/4$. The limit $\theta=0$ corresponds to no mixing, while $\theta\to\pi/4$ corresponds to strong mixing within the isolated-pair model. The dressed splitting becomes
\begin{equation}
\omega_{34}^{(S)}\equiv \omega_3^{(S)}-\omega_4^{(S)}
=
\sqrt{(\omega_3-\omega_4)^2+\left(\frac{2E_z|\mu_{34}^{z}|}{\hbar}\right)^2}.
\label{eq:dressed_splitting}
\end{equation}

\subsection{Effective Hamiltonian in the co-rotating frame}
\label{subsec:rot_frame}

Appendix~\ref{app:stark_derivation} gives the full derivation from the bare-state Hamiltonian, including the rotating-wave approximation and the change of frame. Here we record the resulting reduced Hamiltonian in the Stark basis $\{\ket{1},\ket{2},\ket{3^{(S)}},\ket{4^{(S)}}\}$. From this point onward, $\Heff$, $\Hzero$, $\Hplus$, and $\Hminus$ are written in angular-frequency units, so that the physical Hamiltonian is $\hbar\Heff$.

The reduced Hamiltonian can be written as
\begin{equation}
\Heff(t,\Phi_S)=\Hzero+\Hplus e^{i(\omega_S t+\Phi_S)}+\Hminus e^{-i(\omega_S t+\Phi_S)}.
\label{eq:Heff_main}
\end{equation}
The time-independent block is
\begin{equation}
\Hzero=
\begin{bmatrix}
0 & \Omega_p & 0 & 0\\
\Omega_p & -\Delta_p & \OmTwoThree & 0\\
0 & \OmTwoThree & -(\Delta_p+\Delta_c) & \OmThreeFour\\
0 & 0 & \OmThreeFour & -(\Delta_p+\Delta_c-\Delta_S)
\end{bmatrix},
\label{eq:H0}
\end{equation}
and the $\pm1$ Fourier blocks are
\begin{equation}
\Hplus=
\begin{bmatrix}
0&0&0&0\\
0&0&0&\OmTwoFour\\
0&0&0&0\\
0&0&0&0
\end{bmatrix},
\qquad
\Hminus=\Hplus^\dagger.
\label{eq:Hpm}
\end{equation}
The detunings are
\begin{equation}
\Delta_p=\omega_p-(\omega_2-\omega_1),\qquad
\Delta_c=\omega_c-(\omega_3^{(S)}-\omega_2),\qquad
\Delta_S=\omega_S-\omega_{34}^{(S)}.
\label{eq:detunings}
\end{equation}

The structure of (\ref{eq:H0})--(\ref{eq:Hpm}) is physically transparent. The static block $\Hzero$ contains the probe coupling $\Omega_p$, the primary upper optical coupling $\OmTwoThree$, and the effective RF-driven Stark-state coupling $\OmThreeFour$. The periodic blocks $\Hplus$ and $\Hminus$ contain the bias-enabled optical leg $\OmTwoFour$ and are the only source of explicit time periodicity. Because the Hamiltonian contains only Fourier components at $0$ and $\pm1$, the harmonic-balance equations developed in Section~\ref{sec:pss} will couple only neighboring harmonics.

In the constructive model, one substitutes (\ref{eq:effective_couplings}) and (\ref{eq:effective_signal_coupling}) into (\ref{eq:H0})--(\ref{eq:Hpm}). Although the received RF field physically drives the Stark-state transition $\ket{3^{(S)}}\leftrightarrow\ket{4^{(S)}}$, the chosen rotating frame transfers the explicit phase factor $e^{\pm i(\omega_S t+\Phi_S)}$ to the $\ket{2}\leftrightarrow\ket{4^{(S)}}$ matrix element. This does not change the physics: only the total loop phase is gauge invariant. In the representation used here, the unknown signal amplitude enters through $\OmThreeFour$, while the unknown signal phase enters only through the combination $(\omega_S t+\Phi_S)$.

\subsection{Validity and scope of the reduced Hamiltonian}

The isolated-pair Stark model is the minimal construction needed to expose the DC-Stark-enabled loop-closure mechanism. It neglects additional nearby Stark states and bias-induced mixing with lower-lying ladder states, whose much larger energy separations make their Stark admixture subleading in the minimal model. It also neglects magnetic-sublevel structure (degenerate here in the absence of an applied magnetic field), Doppler averaging, transport, and diagonal time-periodic terms that arise in fuller dressed-basis descriptions and are discarded here as an additional simplifying approximation of the minimal model. Such effects can shift quantitative response maps, resonance conditions, and optimal operating points.

These omissions do not, however, alter the main structural point of the reduced model: the received signal creates a periodically driven closed loop whose phase enters through $(\omega_S t+\Phi_S)$. The harmonic phase law derived in Section~\ref{sec:pss} is therefore robust to many model refinements, provided the dependence on the unknown signal phase remains only through that combination and the dissipator remains time independent. The present four-level construction should thus be viewed as the minimal analytical model that isolates the reception principle cleanly, while more detailed manifold-level descriptions are deferred to future implementation work.


\section{Periodic steady state and exact harmonic phase law}
\label{sec:pss}
The reduced Hamiltonian derived in Section~\ref{sec:hamiltonian} is periodic at the received-signal frequency $\omega_S$. Consequently, the long-time response of the driven open system is not time independent in general, but periodic with the same fundamental period. The goal of this section is to show that the unknown carrier phase enters that periodic response in an especially simple way: each harmonic acquires the exact phase factor $e^{in\Phi_S}$.

\subsection{Driven open-system model}
\label{subsec:master_equation}

Under continuous periodic driving, the density operator synchronizes to the drive rather than converging to a static steady state. We model the receiver with the Lindblad master equation
\begin{equation}
\frac{d\rho}{dt}
=
-i[\Heff(t,\Phi_S),\rho]+\mathcal{D}[\rho],
\label{eq:master_eq}
\end{equation}
where
\[
\mathcal{D}[\rho]
=
\sum_k\left(L_k\rho L_k^\dagger-\frac{1}{2}\{L_k^\dagger L_k,\rho\}\right)
\]
is the time-independent Lindblad dissipator, and the $L_k$ are jump operators describing radiative decay and dephasing.

Because $\Heff(t,\Phi_S)$ is periodic at $\omega_S$, the long-time state is generally a periodic steady state (PSS) with period
\[
T=\frac{2\pi}{\omega_S}.
\]

\subsection{Exact harmonic phase law}
\label{subsec:phase_factorization}

The key observation is that the unknown signal phase appears in the Hamiltonian only through the combination $(\omega_S t+\Phi_S)$. Changing $\Phi_S$ is therefore equivalent to shifting the periodic drive in time. Write the periodic steady state as the Fourier series:
\begin{equation}
\rho(t,\Phi_S)=\sum_{n\in\mathbb{Z}} \rho^{(n)}(\Phi_S)\,e^{in\omega_S t}.
\label{eq:rho_fourier}
\end{equation}

\begin{proposition}[Exact harmonic phase law]
Assume that:
(i) the Hamiltonian depends on the unknown signal phase only through the factors $e^{\pm i(\omega_S t+\Phi_S)}$ in (\ref{eq:Heff_main}),
(ii) the dissipator $\mathcal{D}[\rho]$ in (\ref{eq:master_eq}) is time independent, and
(iii) the driven system admits a unique periodic steady state for each $\Phi_S$.
Then
\begin{equation}
\rho^{(n)}(\Phi_S)=P^{(n)}\,e^{in\Phi_S},
\qquad
P^{(n)}\equiv \rho^{(n)}(0),
\label{eq:factorization}
\end{equation}
for every integer $n$.
\end{proposition}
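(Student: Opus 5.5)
The plan is a time-shift argument combined with uniqueness of the periodic steady state. First I record the covariance of the Hamiltonian. By assumption (i) and (\ref{eq:Heff_main}),
\[
\Heff(t,\Phi_S)=\Heff\!\left(t+\tfrac{\Phi_S}{\omega_S},0\right)
\]
for all $t$. The reason is that $\Hzero$ does not depend on $\Phi_S$, and the only $\Phi_S$ dependence sits in the factors $e^{\pm i(\omega_S t+\Phi_S)}$. I will write $\tau\equiv\Phi_S/\omega_S$ for the equivalent time shift.

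Next I build a candidate solution. Let $\rho_0(t)\equiv\rho(t,0)$ be the periodic steady state for $\Phi_S=0$, and define $\tilde\rho(t)\equiv\rho_0(t+\tau)$. Differentiating and using (\ref{eq:master_eq}) at the shifted time gives
\[
\frac{d\tilde\rho}{dt}(t)=-i\big[\Heff(t+\tau,0),\rho_0(t+\tau)\big]+\mathcal{D}[\rho_0(t+\tau)].
\]
Assumption (ii) is what allows $\mathcal{D}$ to pass through the shift unchanged. By the covariance above, the right-hand side equals $-i[\Heff(t,\Phi_S),\tilde\rho(t)]+\mathcal{D}[\tilde\rho(t)]$. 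So $\tilde\rho$ solves the master equation at phase $\Phi_S$. It is also $T$-periodic, and it remains a valid density operator, since trace, positivity and Hermiticity are preserved under a time translation. Assumption (iii) then gives $\rho(t,\Phi_S)=\rho(t+\Phi_S/\omega_S,0)$.

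Finally I compare Fourier coefficients. Substituting (\ref{eq:rho_fourier}) at $\Phi_S=0$ into the identity gives
\[
\rho(t,\Phi_S)=\sum_n P^{(n)}e^{in\omega_S(t+\Phi_S/\omega_S)}=\sum_n P^{(n)}e^{in\Phi_S}e^{in\omega_S t}.
\]
Fourier coefficients of a $T$-periodic function are unique, for example by projecting onto $e^{-in\omega_S t}$ over one period. This yields (\ref{eq:factorization}) for every $n$. Applying the linear functional $\bra{2}\cdot\ket{1}$ then gives the corresponding statement for the harmonics of $\rho_{21}$ in (\ref{eq:rho21_harmonics}).

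The main obstacle is making precise what "unique periodic steady state" means in (iii), so that the invocation is valid. I would read it as: among $T$-periodic density-operator solutions of (\ref{eq:master_eq}) there is exactly one, which is the long-time attractor. I only need to check that $\tilde\rho$ lies in that class, which the previous step already does. Beyond this, I would note that the argument never uses the specific block structure of $\Hzero$ and $\Hplus$, only the dependence through $\omega_S t+\Phi_S$. This is consistent with the robustness remark in Section~\ref{sec:hamiltonian}.
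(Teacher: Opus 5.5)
Your proposal is correct and follows the paper's proof. Both use the covariance $\Heff(t,\Phi_S)=\Heff(t+\Phi_S/\omega_S,0)$, invoke uniqueness of the periodic steady state to get $\rho(t,\Phi_S)=\rho(t+\Phi_S/\omega_S,0)$, and then compare Fourier coefficients. You also make explicit what the paper leaves implicit, namely that the time-shifted reference state solves the master equation at phase $\Phi_S$ (using the time independence of $\mathcal{D}$) and is an admissible periodic density operator, which is exactly what justifies the appeal to assumption (iii).
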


\noindent\textit{Proof.}
Because all phase dependence appears through $(\omega_S t+\Phi_S)$, one has
\[
\Heff(t,\Phi_S)=\Heff\!\left(t+\frac{\Phi_S}{\omega_S},0\right).
\]
Under the uniqueness assumption, the periodic steady states are related by a time shift:
\[
\rho_{\mathrm{PSS}}(t,\Phi_S)=\rho_{\mathrm{PSS}}\!\left(t+\frac{\Phi_S}{\omega_S},0\right).
\]
Expanding the time-shift relation in Fourier harmonics gives
\[
\rho^{(n)}(\Phi_S)=\rho^{(n)}(0)e^{in\Phi_S},
\]
which is precisely (\ref{eq:factorization}).

The matrices $P^{(n)}$ may therefore be interpreted as the \emph{reference-phase harmonic coefficients}, i.e., the Fourier coefficients evaluated at $\Phi_S=0$. For the probe coherence, (\ref{eq:factorization}) implies the immediate relations
\begin{equation}
|\rho_{21}^{(n)}(\Phi_S)|=|P_{21}^{(n)}|,
\qquad
\arg\!\big(\rho_{21}^{(n)}(\Phi_S)\big)
=
\arg\!\big(P_{21}^{(n)}\big)+n\Phi_S
\pmod{2\pi}.
\label{eq:phase_law_probe}
\end{equation}
Thus the harmonic magnitude is independent of $\Phi_S$, while the harmonic phase shifts linearly with slope $n$.

\begin{remark}
Detunings, linewidths, and dissipation modify the coefficients $P^{(n)}$, but not the exact factor $e^{in\Phi_S}$, provided the three assumptions above remain valid.
\end{remark}

\subsection{Harmonic-balance equations}
\label{subsec:harmonic_balance}

Substituting (\ref{eq:Heff_main}) and (\ref{eq:rho_fourier}) into (\ref{eq:master_eq}) and equating coefficients of $e^{in\omega_S t}$ gives
\begin{equation}
in\omega_S\,P^{(n)}
=
-i[\Hzero,P^{(n)}]
-i[\Hplus,P^{(n-1)}]
-i[\Hminus,P^{(n+1)}]
+\mathcal{D}[P^{(n)}],
\label{eq:harmonic_balance_eq}
\end{equation}
where $\mathcal{D}[\cdot]$ denotes the linear dissipative action of the time-independent Lindbladian. Because the Hamiltonian contains only Fourier components at $0$ and $\pm1$, each harmonic couples only to its nearest neighbors $(n-1,n,n+1)$.

Equation~(\ref{eq:harmonic_balance_eq}) shows that the full periodic-steady-state problem may be solved at $\Phi_S=0$ without loss of generality. Once the reference-phase coefficients $P^{(n)}$ have been computed, the response at arbitrary $\Phi_S$ follows immediately from (\ref{eq:factorization}). In practice, the infinite system is truncated to $n\in[-N,N]$ and solved numerically. Appendix~\ref{app:floquet_linear_system} gives a compact Liouville-space form of the truncated linear system.

The exact harmonic phase law reduces signal recovery to estimating the phase and magnitude of a demodulated probe harmonic. The next section turns that structure into explicit phase and amplitude estimators.

\section{Optical readout and recovery of signal phase and amplitude}
\label{sec:estimators}

This section converts the exact harmonic phase law into operational estimators based on the measured probe response. Throughout, ``LO-free'' means that no auxiliary \emph{RF} local oscillator is injected into the atoms; ordinary optical/electronic synchronous demodulation outside the vapor cell is still used to extract a complex probe harmonic at the known carrier frequency.

\subsection{Observable and calibrated harmonic readout}
\label{subsec:observable}

In the weak-probe regime, the susceptibility of the $\ket{1}\!\to\!\ket{2}$ transition is proportional to $\rho_{21}(t)$, so probe absorption and dispersion are proportional to $\Im[\rho_{21}(t)]$ and $\Re[\rho_{21}(t)]$, respectively \cite{Sedlacek2012NatPhys}. We therefore consider a complex readout channel of the form
\[
y(t)=\kappa\,\rho_{21}(t),
\]
where $\kappa$ is a fixed complex gain representing the optical/electronic detection chain. In practice, this constant gain can be absorbed into calibration, a reference measurement, or the response map used for amplitude inversion. For notational simplicity, we write the demodulated output directly in terms of the corresponding probe harmonic.

The $n$-th harmonic is obtained by synchronous demodulation over an integration interval spanning an integer number of received-signal periods:
\begin{equation}
\widehat{\rho}_{21}^{(n)}
=
\frac{1}{T_{\mathrm{int}}}\int_{t_0}^{t_0+T_{\mathrm{int}}} y(t)\,e^{-in\omega_S t}\,dt.
\label{eq:lockin_estimator}
\end{equation}
After calibration, $\widehat{\rho}_{21}^{(n)}$ is interpreted as the measured complex phasor associated with the $n$-th harmonic of the probe coherence.

\subsection{Phase recovery from a probe harmonic}
\label{subsec:phase_estimator}

For any nonzero harmonic, the exact phase law gives
\begin{equation}
\rho_{21}^{(n)}(\Phi_S)=P_{21}^{(n)}\,e^{in\Phi_S},
\qquad n\neq 0.
\label{eq:harmonic_phase}
\end{equation}
Equivalently,
\[
\arg\!\big(\rho_{21}^{(n)}(\Phi_S)\big)
=
\arg\!\big(P_{21}^{(n)}\big)+n\Phi_S
\pmod{2\pi}.
\]
Let $\rho_{21,\mathrm{ref}}^{(n)}$ be a nonzero reference harmonic measured or calibrated at known signal phase $\Phi_0$ under the same operating point and with the same readout-gain convention. Then the signal phase is estimated as
\begin{equation}
\widehat{\Phi}_S
=
\Phi_0+\frac{1}{n}\arg\!\left(\frac{\widehat{\rho}_{21}^{(n)}}{\rho_{21,\mathrm{ref}}^{(n)}}\right)
\pmod{\frac{2\pi}{n}}.
\label{eq:phase_estimator}
\end{equation}
The ratio removes any fixed complex gain and any fixed reference phase offset associated with the detection chain.

For $n=1$, (\ref{eq:phase_estimator}) recovers the phase modulo $2\pi$ and therefore avoids branch ambiguity. For $n>1$, the estimate is intrinsically defined only modulo $2\pi/n$, so practical use of higher harmonics requires prior phase-range information or an additional disambiguation step. Throughout the paper, the first harmonic is emphasized because it is typically the strongest phase-bearing harmonic and is branch-unambiguous.

\subsection{Amplitude recovery from an injective response map}
\label{subsec:amplitude_estimator}

Because the exact phase law separates the unknown carrier phase into the factor $e^{in\Phi_S}$, the magnitude of the $n$th harmonic depends only on the underlying signal strength $\Omega_S$. For a fixed operating point and calibrated readout convention, define the $n$th-harmonic response map
\begin{equation}
m_n(\Omega_S)\triangleq |P_{21}^{(n)}(\Omega_S)|.
\label{eq:response_map}
\end{equation}
We parameterize this map by $\Omega_S$, rather than by the dressed-basis coupling $\OmThreeFour$, because $\Omega_S$ is directly proportional to the physical RF amplitude $A_S$ through (\ref{eq:rabi_scales}).

If $m_n(\Omega_S)$ is injective on the chosen operating interval---for example, on a monotone branch---then a measured harmonic magnitude determines a unique $\Omega_S$, and the signal-strength estimate is
\begin{equation}
\widehat{\Omega}_S = m_n^{-1}\!\left(|\widehat{\rho}_{21}^{(n)}|\right).
\label{eq:amp_estimator}
\end{equation}
If the response map is not injective on the relevant interval, inversion becomes ambiguous and additional information, such as prior range knowledge or a second observable/harmonic, is required. In experiment, the response map may be calibrated directly in measured harmonic units, so absolute normalization of $\rho_{21}$ is not required as long as the same calibration convention is used during operation.

In the constructive model, the physical field amplitude follows from (\ref{eq:rabi_scales}) as
\begin{equation}
\widehat{A}_S=\frac{2\hbar}{|\mu_{34}^{S}|}\,\widehat{\Omega}_S.
\label{eq:A_from_Omega}
\end{equation}

For amplitude recovery, harmonic magnitude alone is not sufficient; the local slope of the response map also matters. We therefore define the logarithmic sensitivity
\begin{equation}
s_n(\Omega_S)
\triangleq
\frac{d\ln m_n(\Omega_S)}{d\ln\Omega_S}
=
\frac{\Omega_S}{m_n(\Omega_S)}\frac{dm_n(\Omega_S)}{d\Omega_S}.
\label{eq:log_sensitivity}
\end{equation}
The interpretation is straightforward: $|s_n|\approx 1$ corresponds to a locally proportional response, $|s_n|\ll 1$ indicates a flat or saturated region with poor amplitude sensitivity, and $s_n<0$ corresponds to a decreasing but still potentially invertible monotone branch.

Phase recovery therefore depends primarily on the angle of a demodulated harmonic phasor, whereas amplitude recovery depends on both the harmonic magnitude and the local slope of the response map. The next section quantifies this distinction through high-SNR RMSE laws.

\section{High-SNR error laws and mixing-angle design criteria}
\label{sec:performance}
This section quantifies the accuracy of phase and amplitude recovery under the spatially uniform-bias benchmark and uses the resulting local high-SNR laws to define operating-point criteria for the bias-controlled Stark-mixing angle $\theta$. The analysis is formulated for a noisy demodulated probe harmonic and then specialized to the first harmonic, which is the main branch-unambiguous readout used throughout the paper.

\subsection{Noise model and harmonic SNR}
\label{subsec:noise_model}

Following Section~\ref{sec:estimators}, consider the calibrated demodulated harmonic phasor $\widehat{\rho}_{21}^{(n)}$. We model it as
\begin{equation}
\widehat{\rho}_{21}^{(n)}=\rho_{21}^{(n)}(\Phi_S)+\eta^{(n)},
\label{eq:additive_noise}
\end{equation}
where $\eta^{(n)}=\eta_I^{(n)}+i\eta_Q^{(n)}$ is circular complex Gaussian noise after demodulation,
\[
\eta_I^{(n)}\sim \mathcal{N}(0,\sigma_n^2),\qquad
\eta_Q^{(n)}\sim \mathcal{N}(0,\sigma_n^2),\qquad
\mathbb{E}[|\eta^{(n)}|^2]=2\sigma_n^2.
\]
The variance $\sigma_n^2$ is the post-demodulation quadrature-noise variance and therefore absorbs detector noise, electronics noise, and the effect of the chosen integration time.

The corresponding harmonic SNR is
\begin{equation}
\SNR_n
=
\frac{|\rho_{21}^{(n)}(\Phi_S)|^2}{\mathbb{E}[|\eta^{(n)}|^2]}
=
\frac{|P_{21}^{(n)}|^2}{2\sigma_n^2},
\label{eq:snr_def}
\end{equation}
where the second equality follows from the exact phase law in Section~\ref{sec:pss}. Hence $\SNR_n$ is independent of the unknown carrier phase $\Phi_S$.

\subsection{Phase and amplitude RMSE laws}
\label{subsec:rmse_laws}

Appendix~\ref{app:rmse_derivation} gives the first-order derivation of the error laws below. The approximation is local and high-SNR: it assumes a nonzero chosen harmonic, small phasor perturbations, a locally smooth response map, and the absence of strong phase wrapping or inverse-map ambiguity over the error scale of interest.

Linearizing the argument of a noisy complex phasor yields
\begin{equation}
\mathrm{Var}(\widehat{\Phi}_S)\approx \frac{1}{2n^2\,\SNR_n},
\qquad
\RMSE_{\Phi}\approx \frac{1}{n\sqrt{2\,\SNR_n}}.
\label{eq:phase_rmse}
\end{equation}
Thus, phase recovery accuracy is controlled solely by the SNR of the chosen phase-bearing harmonic.

For amplitude recovery, linearizing the inverse response map yields
\begin{equation}
\frac{\RMSE_{\Omega_S}}{\Omega_S}
\approx
\frac{1}{|s_n(\Omega_S)|\sqrt{2\,\SNR_n}},
\label{eq:amp_rmse}
\end{equation}
where $s_n(\Omega_S)$ is the logarithmic sensitivity defined in (\ref{eq:log_sensitivity}). In contrast to phase recovery, amplitude recovery depends on both harmonic SNR and the local slope of the response map.

It is useful to package this dependence into the local effective amplitude SNR
\begin{equation}
\SNR_{A,n}^{\mathrm{eff}}(\Omega_S)
\triangleq
|s_n(\Omega_S)|^2\,\SNR_n,
\label{eq:snr_A_eff_uniform}
\end{equation}
so that (\ref{eq:amp_rmse}) becomes
\[
\frac{\RMSE_{\Omega_S}}{\Omega_S}
\approx
\frac{1}{\sqrt{2\,\SNR_{A,n}^{\mathrm{eff}}(\Omega_S)}}.
\]
The familiar proportional-response law
\[
\frac{\RMSE_{\Omega_S}}{\Omega_S}\approx \frac{1}{\sqrt{2\,\SNR_n}}
\]
is therefore recovered only when $|s_n(\Omega_S)|\approx 1$.

\subsection{Mixing-angle design criteria}
\label{subsec:three_optima}

The general RMSE laws above apply to any harmonic $n$. For operating-point design, however, we specialize to the first harmonic $n=1$, which is the main branch-unambiguous signal-bearing harmonic in the present receiver. For a chosen design signal level $\Omega_{S,0}$ and fixed optical settings, define
\begin{equation}
m(\Omega_S,\theta)\triangleq |P_{21}^{(1)}(\Omega_S;\theta)|,
\qquad
s(\Omega_S,\theta)\triangleq
\frac{\Omega_S}{m(\Omega_S,\theta)}
\frac{\partial m}{\partial \Omega_S}(\Omega_S,\theta).
\label{eq:m_and_s_theta}
\end{equation}
It is also convenient to introduce the two local design metrics
\begin{equation}
M_{\phi}(\theta)\triangleq m(\Omega_{S,0},\theta),
\qquad
M_A(\theta)\triangleq
\Omega_{S,0}\left|\frac{\partial m}{\partial \Omega_S}(\Omega_{S,0},\theta)\right|
=
|s(\Omega_{S,0},\theta)|\,m(\Omega_{S,0},\theta).
\label{eq:design_metrics}
\end{equation}
The first metric controls phase accuracy, while the second controls relative amplitude accuracy.

\paragraph{Phase-optimal mixing angle.}
At fixed post-demodulation noise variance, minimizing the phase RMSE in (\ref{eq:phase_rmse}) is equivalent to maximizing the first-harmonic magnitude:
\begin{equation}
\theta_{\phi}^{\star}(\Omega_{S,0})
=
\arg\max_{\theta\in[0,\pi/4]} m(\Omega_{S,0},\theta)
=
\arg\max_{\theta\in[0,\pi/4]} M_{\phi}(\theta).
\label{eq:theta_phase_star}
\end{equation}

\paragraph{Amplitude-optimal mixing angle.}
At fixed noise variance, minimizing the relative amplitude RMSE in (\ref{eq:amp_rmse}) is equivalent to maximizing the local slope magnitude, or equivalently the metric $M_A(\theta)$:
\begin{equation}
\theta_{A}^{\star}(\Omega_{S,0})
=
\arg\max_{\theta\in[0,\pi/4]}
\left[
|s(\Omega_{S,0},\theta)|\,m(\Omega_{S,0},\theta)
\right]
=
\arg\max_{\theta\in[0,\pi/4]}
\left[
\Omega_{S,0}\left|\frac{\partial m}{\partial \Omega_S}(\Omega_{S,0},\theta)\right|
\right].
\label{eq:theta_amp_star}
\end{equation}

\paragraph{Joint-design mixing angle.}
There is no universal single optimum for simultaneous phase and amplitude recovery, because the answer depends on how the two error metrics are weighted. A convenient dimensionless objective is the weighted sum of squared phase RMSE and squared relative amplitude RMSE:
\begin{equation}
J(\theta;\Omega_{S,0})
=
w_{\phi}\,\RMSE_{\Phi}^{2}(\theta;\Omega_{S,0})
+
w_{A}\left(\frac{\RMSE_{\Omega_S}(\theta;\Omega_{S,0})}{\Omega_{S,0}}\right)^{2},
\qquad
w_{\phi},w_A\ge 0.
\label{eq:joint_cost}
\end{equation}
Only the relative weighting of $w_{\phi}$ and $w_A$ matters; if both are nonzero, one may normalize them, for example by imposing $w_{\phi}+w_A=1$, without changing the minimizer. These weights represent application priorities, such as favoring coherent phase recovery, amplitude demodulation fidelity, or a balanced operating regime.

Using (\ref{eq:phase_rmse}) and (\ref{eq:amp_rmse}) with $n=1$ gives
\begin{equation}
J(\theta;\Omega_{S,0})
\approx
\frac{\sigma_1^{2}}{m^{2}(\Omega_{S,0},\theta)}
\left[
w_{\phi}+\frac{w_A}{|s(\Omega_{S,0},\theta)|^{2}}
\right].
\label{eq:joint_cost_explicit}
\end{equation}
The corresponding joint-design mixing angle is
\begin{equation}
\theta_{J}^{\star}(\Omega_{S,0})
=
\arg\min_{\theta\in[0,\pi/4]} J(\theta;\Omega_{S,0}).
\label{eq:theta_joint_star}
\end{equation}

\paragraph{Balanced mixing angle.}
A useful weight-free compromise is the mixing angle that equalizes the \emph{fractional} degradation of phase and amplitude estimation relative to their individually optimal settings. Define the phase-degradation factor
\begin{equation}
D_{\phi}(\theta;\Omega_{S,0})
\triangleq
\frac{\RMSE_{\Phi}(\theta;\Omega_{S,0})}
{\RMSE_{\Phi}(\theta_{\phi}^{\star};\Omega_{S,0})}
=
\frac{M_{\phi}(\theta_{\phi}^{\star})}{M_{\phi}(\theta)},
\label{eq:Dphi_def}
\end{equation}
and the amplitude-degradation factor
\begin{equation}
D_{A}(\theta;\Omega_{S,0})
\triangleq
\frac{\left(\RMSE_{\Omega_S}(\theta;\Omega_{S,0})/\Omega_{S,0}\right)}
{\left(\RMSE_{\Omega_S}(\theta_{A}^{\star};\Omega_{S,0})/\Omega_{S,0}\right)}
=
\frac{M_{A}(\theta_{A}^{\star})}{M_{A}(\theta)},
\label{eq:DA_def}
\end{equation}
where the equalities use the high-SNR laws under fixed post-demodulation noise variance. The balanced angle is then defined by the minimax criterion
\begin{equation}
\theta_{\mathrm{bal}}^{\star}(\Omega_{S,0})
\triangleq
\arg\min_{\theta\in[0,\pi/4]}
\max\!\left\{
D_{\phi}(\theta;\Omega_{S,0}),
D_{A}(\theta;\Omega_{S,0})
\right\}.
\label{eq:theta_bal_star}
\end{equation}
When the two degradation curves intersect in the interior of the interval, the minimax solution occurs at or very near the crossing
\[
D_{\phi}(\theta_{\mathrm{bal}}^{\star};\Omega_{S,0})
=
D_{A}(\theta_{\mathrm{bal}}^{\star};\Omega_{S,0}).
\]
Unlike $\theta_{J}^{\star}$, which depends on user-specified weights, $\theta_{\mathrm{bal}}^{\star}$ provides a practical weight-free compromise that equalizes the relative phase and amplitude penalties. In the numerical section, we use $\theta_0=\theta_{\mathrm{bal}}^{\star}(\Omega_{S,0})$ as the default nominal operating point and reserve $\theta=\pi/8$ for perturbative-seed comparisons only.

\subsection{Perturbative initialization rule}
\label{subsec:perturbative_seed}

In the weak-signal, weak-loop-closure regime, Appendix~\ref{app:perturbative_scaling} gives the representative scaling
\begin{equation}
m(\Omega_S,\theta)
\propto
\Omega_p\,\Omega_c^2\,\Omega_S\,
f(\theta)\,
\mathcal{F}(\Delta_p,\Delta_c,\{\gamma\}),
\qquad
f(\theta)=\sin\theta\,\cos\theta\,\cos(2\theta).
\label{eq:perturbative_scaling}
\end{equation}
Since
\begin{equation}
f(\theta)=\frac{1}{4}\sin(4\theta)=\frac{\beta}{2(1+\beta^2)},
\label{eq:f_theta}
\end{equation}
the perturbative initialization rule is
\begin{equation}
\theta_{\mathrm{seed}}=\frac{\pi}{8},
\qquad
\beta_{\mathrm{seed}}=1,
\qquad
E_{z,\mathrm{seed}}=\frac{\hbar\Delta_{34}}{2|\mu_{34}^{z}|}.
\label{eq:theta_seed}
\end{equation}
This result is only a weak-coupling heuristic seed for numerical optimization of the full model. It is not a design theorem and should not be interpreted as a universal optimum.

In general, non-perturbative corrections shift the optimum of the full Floquet--Liouville model, and amplitude sensitivity can depart substantially from the linear-response regime. For that reason, the practically relevant operating points are $\theta_{\phi}^{\star}$, $\theta_{A}^{\star}$, and, once application weights are specified, $\theta_{J}^{\star}$.

All of the criteria above are defined with respect to the spatially uniform-bias benchmark. The next section explains how spatial bias variation modifies the coherent gain and the effective amplitude sensitivity through quasistatic averaging.

\section{Uniform benchmark and nonuniform-bias effects}
\label{sec:bias_cases}

The error laws and design criteria of Section~\ref{sec:performance} are benchmark results for a spatially uniform DC bias. We now consider how these results are modified when the bias varies across the interaction region. The key idea is to treat the nonuniform case as a quasistatic weighted average of local uniform-bias responses.

\subsection{Uniform-bias benchmark}
\label{subsec:uniform_benchmark}

The central benchmark of the paper is the uniform-bias model
\begin{equation}
E_z(\mathbf{r})\equiv E_z.
\label{eq:uniform_benchmark_repeat}
\end{equation}
In that setting, the reduced Hamiltonian has the exact Floquet form (\ref{eq:Heff_main}), the periodic steady state obeys the exact harmonic phase law (\ref{eq:factorization}), and the estimators and high-SNR laws of Sections~\ref{sec:estimators} and \ref{sec:performance} apply directly. When we later compare against a nonuniform bias centered at a nominal value $E_0$, the corresponding uniform-bias reference is the response evaluated at $E_z=E_0$.

\subsection{Quasistatic nonuniform bias and coherent averaging}
\label{subsec:nonuniform_bias}

We now allow
\begin{equation}
E_z(\mathbf{r})=E_0+\delta E(\mathbf{r}),
\qquad
\langle \delta E\rangle=0,
\qquad
\langle \delta E^2\rangle=\sigma_E^2,
\label{eq:inhomogeneous_bias}
\end{equation}
where the angle brackets denote a normalized spatial average over the interaction region,
\begin{equation}
\langle f\rangle
\triangleq
\int_V w(\mathbf{r})\,f(\mathbf{r})\,d^3\mathbf{r},
\qquad
\int_V w(\mathbf{r})\,d^3\mathbf{r}=1.
\label{eq:spatial_average_def}
\end{equation}
The weighting function $w(\mathbf{r})$ represents the spatial sensitivity of the optical measurement, including mode overlap and collection weighting.

The quasistatic averaging model assumes that: (i) the bias profile is effectively static over the demodulation window, (ii) each spatial point can be described by the corresponding local uniform-bias response, and (iii) the measured probe harmonic is a linear weighted average of those local coherences. Under these assumptions, the local first-harmonic coefficient obeys
\[
\rho_{21}^{(1)}(\Phi_S;E_z)=P_{21}^{(1)}(E_z)\,e^{i\Phi_S}.
\]
Averaging over the interaction region gives
\begin{equation}
\overline{\rho}_{21}^{(1)}(\Phi_S)
=
\left\langle P_{21}^{(1)}(E_z(\mathbf{r}))\right\rangle e^{i\Phi_S}
\triangleq
\overline{P}_{21}^{(1)}\,e^{i\Phi_S}.
\label{eq:averaged_harmonic}
\end{equation}
This is a key structural result: because every local first-harmonic response carries the same factor $e^{i\Phi_S}$, any linear spatial average preserves the exact phase law. Nonuniform bias alters the complex coefficient, but not the multiplicative phase factor.

To compare the averaged response with the corresponding uniform benchmark at the nominal bias $E_0$, define the coherent-gain factor
\begin{equation}
G\triangleq \frac{|\overline{P}_{21}^{(1)}|}{|P_{21}^{(1)}(E_0)|}.
\label{eq:G_def}
\end{equation}
At fixed detection-noise variance,
\begin{equation}
\SNR_{1,\mathrm{eff}}=G^2\,\SNR_{1,0},
\qquad
\SNR_{1,0}=\frac{|P_{21}^{(1)}(E_0)|^2}{2\sigma_1^2}.
\label{eq:snr_eff}
\end{equation}
Hence the high-SNR phase penalty relative to the uniform benchmark is
\begin{equation}
\frac{\RMSE_{\Phi,\mathrm{nonunif}}}{\RMSE_{\Phi,\mathrm{unif}}}
\approx \frac{1}{G}.
\label{eq:phase_penalty}
\end{equation}
For phase estimation, the nonuniform-bias effect therefore enters through coherent-gain reduction alone.

Two physical mechanisms dominate the change in $G$. First, the local mixing angle $\theta(E_z)$ varies across the interaction region, which changes both the bias-enabled optical leg $\OmTwoFour$ and the effective RF coupling $\OmThreeFour$ from point to point. Second, the dressed splitting $\omega_{34}^{(S)}(E_z)$ varies with the local bias, which creates inhomogeneous RF detuning even if the received signal is tuned to the nominal dressed splitting at $E_0$:
\begin{equation}
\delta\Delta_S(\mathbf{r})
\approx
-\left.\frac{d\omega_{34}^{(S)}}{dE_z}\right|_{E_0}\delta E(\mathbf{r}),
\qquad
\left.\frac{d\omega_{34}^{(S)}}{dE_z}\right|_{E_0}
=
\frac{2|\mu_{34}^{z}|}{\hbar}\sin(2\theta_0),
\quad
\theta_0=\theta(E_0).
\label{eq:domega_dE}
\end{equation}

\subsection{Amplitude consequences under nonuniform bias}
\label{subsec:amplitude_nonuniform}

Amplitude recovery is affected more subtly than phase recovery because spatial averaging modifies not only the coherent harmonic gain but also the slope of the effective response map. Define the averaged first-harmonic response map
\begin{equation}
m_{\mathrm{avg}}(\Omega_S)\triangleq \left|\overline{P}_{21}^{(1)}(\Omega_S)\right|,
\label{eq:effective_response_map}
\end{equation}
together with its logarithmic sensitivity
\begin{equation}
s_{\mathrm{avg}}(\Omega_S)\triangleq
\frac{d\ln m_{\mathrm{avg}}(\Omega_S)}{d\ln \Omega_S}.
\label{eq:effective_sensitivity}
\end{equation}
Compared with the uniform case, spatial averaging can shift, shrink, or even destroy the monotone interval on which the response map is injective, so the usable amplitude-estimation range itself can change.

The high-SNR amplitude law becomes
\begin{equation}
\frac{\RMSE_{\Omega_S}}{\Omega_S}
\approx
\frac{1}{|s_{\mathrm{avg}}(\Omega_S)|\sqrt{2\,\SNR_{1,\mathrm{eff}}}}.
\label{eq:amp_rmse_nonuniform}
\end{equation}
For a chosen design signal level $\Omega_{S,0}$, it is convenient to define the corresponding effective amplitude SNR
\begin{equation}
\SNR_{A,\mathrm{eff}}
\triangleq
|s_{\mathrm{avg}}(\Omega_{S,0})|^2\,\SNR_{1,\mathrm{eff}},
\label{eq:snr_A_eff_theory}
\end{equation}
so that the local high-SNR form at $\Omega_{S,0}$ becomes
\[
\frac{\RMSE_{\Omega_S}}{\Omega_{S,0}}
\approx
\frac{1}{\sqrt{2\,\SNR_{A,\mathrm{eff}}}}.
\]

Unlike phase recovery, amplitude recovery is therefore controlled by two factors: coherent-gain reduction through $G$ and slope reshaping through $s_{\mathrm{avg}}$. In particular, moderate nonuniformity can in some regimes reduce the coherent gain while simultaneously moving the operating point onto a steeper effective response branch. As a result, the amplitude behavior at fixed baseline SNR is not determined by $G$ alone.

The numerical section quantifies these effects directly: it confirms the uniform-bias design optima, measures the coherent-gain reduction relevant for phase recovery, and tests the slope-adjusted collapse laws for amplitude recovery under nonuniform bias.

\section{Numerical illustrations of the phase law, design trade-offs, and nonuniform-bias effects}
\label{sec:numerics}

This section uses the reduced Floquet--Liouville model to illustrate the main claims of the theory. The numerical results are organized around four questions: whether the first harmonic carries the received-signal phase exactly, whether its magnitude provides a usable amplitude-response map, whether the full model supports distinct uniform-bias design optima, and how nonuniform bias modifies phase and amplitude recovery. To keep the discussion focused on these points, internal solver-convergence and time-domain validation checks are omitted from the main text and reported in Appendix~\ref{app:numerics_validation}.

\subsection{Numerical setup}
\label{subsec:numerical_setup}

We compute the periodic steady state by truncating the harmonic-balance equations~\eqref{eq:harmonic_balance_eq} to harmonics $n\in[-N,N]$ and solving the resulting block-tridiagonal linear system for the reference-phase coefficients $P^{(n)}$ at $\Phi_S=0$. Unless stated otherwise, all production results use $N=3$. Higher-order truncations and direct time-domain integration were checked separately and found to agree with the production setting over the parameter range considered here; see Appendix~\ref{app:numerics_validation}.

Dissipation is modeled by the minimal ladder-EIT Lindblad set used throughout the numerical code: decay $\ket{2}\to\ket{1}$ at rate $\gamma_{21}$, decay $\ket{3^{(S)}}\to\ket{2}$ and $\ket{4^{(S)}}\to\ket{2}$ at rates $\gamma_{32}$ and $\gamma_{42}$, and weak pure dephasing on the upper states. All frequencies are reported in units of $\gamma_{21}$, which is set to unity.

Unless the horizontal axis itself is the mixing angle or the nonuniformity level, we use the nominal operating point
\begin{equation}
\Omega_p=0.2,\qquad
\Omega_c=1,\qquad
\Delta_p=\Delta_c=0,\qquad
\Omega_{S,0}=0.12,\qquad
\Delta_S=0,\qquad
\theta_0=\theta_{\mathrm{bal}}^{\star}(\Omega_{S,0}).
\label{eq:nominal_point}
\end{equation}
For the present parameter set, the balanced criterion places the nominal operating point near $\theta_0\approx 0.56$ rad. This balanced angle is used in the remaining numerical examples, whereas the perturbative seed $\theta=\pi/8$ is retained only in the design-landscape comparison. Monte-Carlo RMSE curves below use $3\times 10^4$ trials per SNR point.

\subsection{Exact phase law and first-harmonic amplitude response}
\label{subsec:numerical_phase_amp}

Figure~\ref{fig:num_phase_law} gives the most direct numerical illustration of the reception mechanism at the balanced nominal point $\theta_0=\theta_{\mathrm{bal}}^{\star}(\Omega_{S,0})$. At this operating point, the bias-enabled upper optical leg is active and the first-harmonic phasor has a substantial nonzero magnitude. By contrast, when $\theta=0$ the loop-closing leg vanishes and the first harmonic collapses to zero in the reduced model. Varying $\Phi_S$ then rotates $\rho_{21}^{(1)}(\Phi_S)$ rigidly in the complex plane without changing its radius, exactly as predicted by the phase law~\eqref{eq:phase_law_probe}. The residual plot in Fig.~\ref{fig:num_phase_law}(b) makes the point directly: after division by a reference harmonic, the recovered first-harmonic phase tracks $\Phi_S$ with residuals at the level of numerical machine precision across the full $2\pi$ sweep.

\begin{figure}[t]
\centering
\begin{minipage}{0.49\linewidth}
\centering
\includegraphics[width=\linewidth]{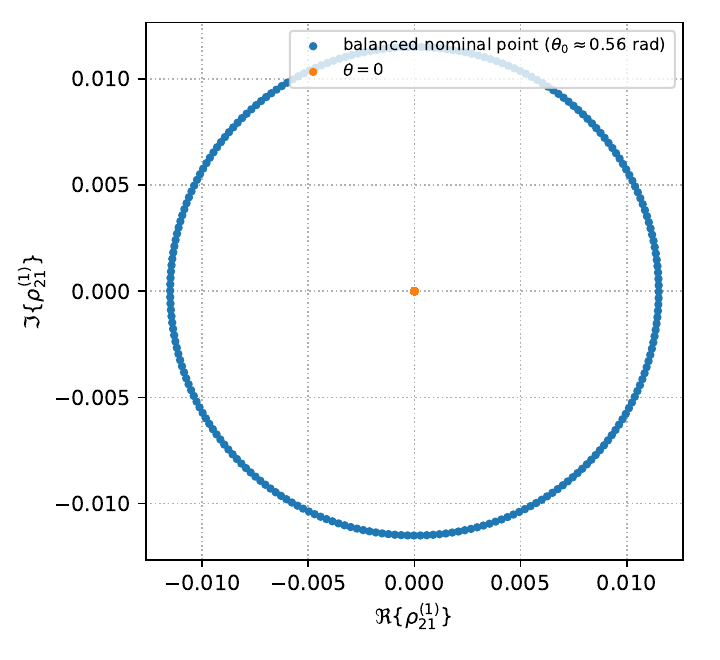}\\[-0.5em]
\textbf{(a)} Complex phasor $\rho_{21}^{(1)}(\Phi_S)$ for $\Phi_S\in[0,2\pi)$ at $\theta=\theta_0$, with the $\theta=0$ collapse shown for comparison.
\end{minipage}
\hfill
\begin{minipage}{0.49\linewidth}
\centering
\includegraphics[width=\linewidth]{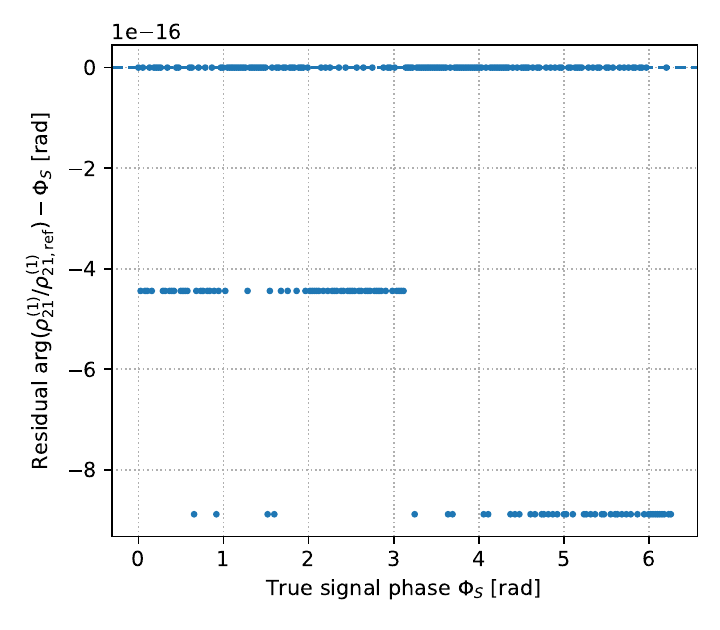}\\[-0.5em]
\textbf{(b)} Phase-law residual $\arg\!\big(\rho_{21}^{(1)}/\rho_{21,\mathrm{ref}}^{(1)}\big)-\Phi_S$.
\end{minipage}
\caption{Numerical confirmation of the exact first-harmonic phase law at the balanced nominal operating point. The DC-Stark-enabled loop generates a nonzero first harmonic, and varying $\Phi_S$ rotates that harmonic rigidly in the complex plane. The residual plot shows that the reference-normalized first-harmonic phase follows the received-signal phase with numerically negligible deviation.}
\label{fig:num_phase_law}
\end{figure}

Amplitude recovery is illustrated in Fig.~\ref{fig:num_amp_calibration}. Panel~(a) shows the first-harmonic response map
\[
m(\Omega_S)\triangleq |P_{21}^{(1)}(\Omega_S)|,
\]
while panel~(b) shows the corresponding logarithmic sensitivity
\[
s(\Omega_S)\triangleq \frac{d\ln m(\Omega_S)}{d\ln \Omega_S}.
\]
At the balanced operating point, the design level $\Omega_{S,0}$ lies on an injective monotone branch, and the local sensitivity is moderate, with $|s(\Omega_{S,0})|\approx 0.65$. This is substantially steeper than at the perturbative seed $\theta=\pi/8$, so the balanced operating point yields markedly improved amplitude conditioning while preserving a clean invertible response branch.

\begin{figure}[t]
\centering
\includegraphics[width=0.92\linewidth]{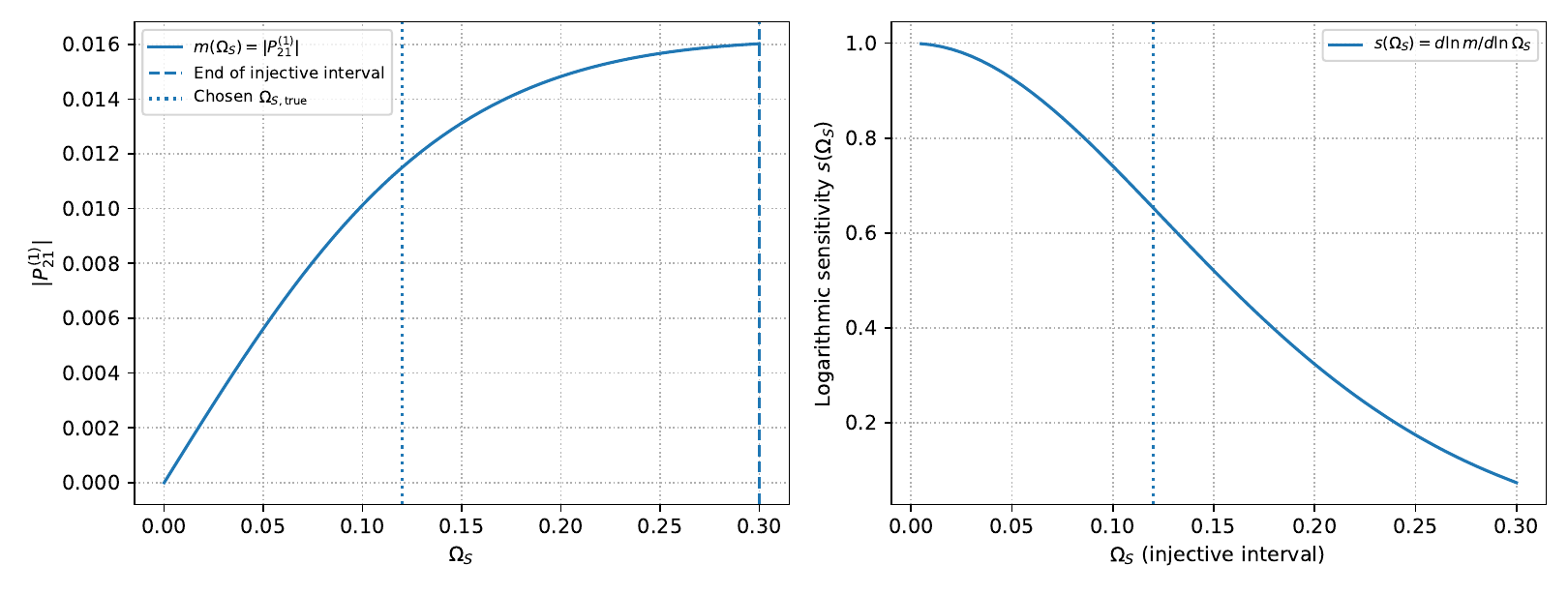}
\caption{First-harmonic response map and local logarithmic sensitivity at the nominal operating point. Panel (a) shows the injective branch of $m(\Omega_S)=|P_{21}^{(1)}(\Omega_S)|$ together with the chosen design level $\Omega_{S,0}=0.12$. Panel (b) shows $s(\Omega_S)=d\ln m/d\ln\Omega_S$ on the same branch. Amplitude recovery requires both injectivity of the response map and a sufficiently large local slope.}
\label{fig:num_amp_calibration}
\end{figure}

\subsection{Uniform-bias design landscape}
\label{subsec:numerical_theta_sweep}

The next figure shows why the perturbative seed $\theta=\pi/8$ should not be confused with an operating optimum of the full model. Figure~\ref{fig:num_design_landscape}(a) compares the full first-harmonic phase metric
\[
M_{\phi}(\theta)=m(\Omega_{S,0},\theta)
\]
with the scaled perturbative proxy $f(\theta)=\sin\theta\cos\theta\cos(2\theta)$. The full-model phase optimum occurs at
\[
\theta_{\phi}^{\star}\approx 0.49\ \text{rad},
\]
which is clearly to the right of the perturbative seed $\pi/8\approx 0.393$ rad. The proxy therefore remains useful as an initialization rule, but it does not locate the full-model optimum.

Figure~\ref{fig:num_design_landscape}(b) overlays the normalized phase metric $M_{\phi}(\theta)$, the amplitude metric
\[
M_A(\theta)=\Omega_{S,0}\left|\frac{\partial m}{\partial\Omega_S}(\Omega_{S,0},\theta)\right|,
\]
and a representative equal-weight joint utility derived from $J(\theta)$. The amplitude-optimal angle occurs later than the phase-optimal one, at approximately
\[
\theta_{A}^{\star}\approx 0.60\ \text{rad}.
\]
The balanced operating point $\theta_{\mathrm{bal}}^{\star}$ is read off near the crossing of the normalized phase and amplitude curves; for the present parameter set it lies at approximately
\[
\theta_{\mathrm{bal}}^{\star}\approx 0.56\ \text{rad},
\]
between $\theta_{\phi}^{\star}$ and $\theta_{A}^{\star}$. The equal-weight joint-design angle $\theta_{J}^{\star}$ lies nearby. In the remainder of the numerical section, we therefore use $\theta_0=\theta_{\mathrm{bal}}^{\star}(\Omega_{S,0})$ as the nominal operating point rather than the perturbative seed.

\begin{figure}[t]
\centering
\includegraphics[width=0.92\linewidth]{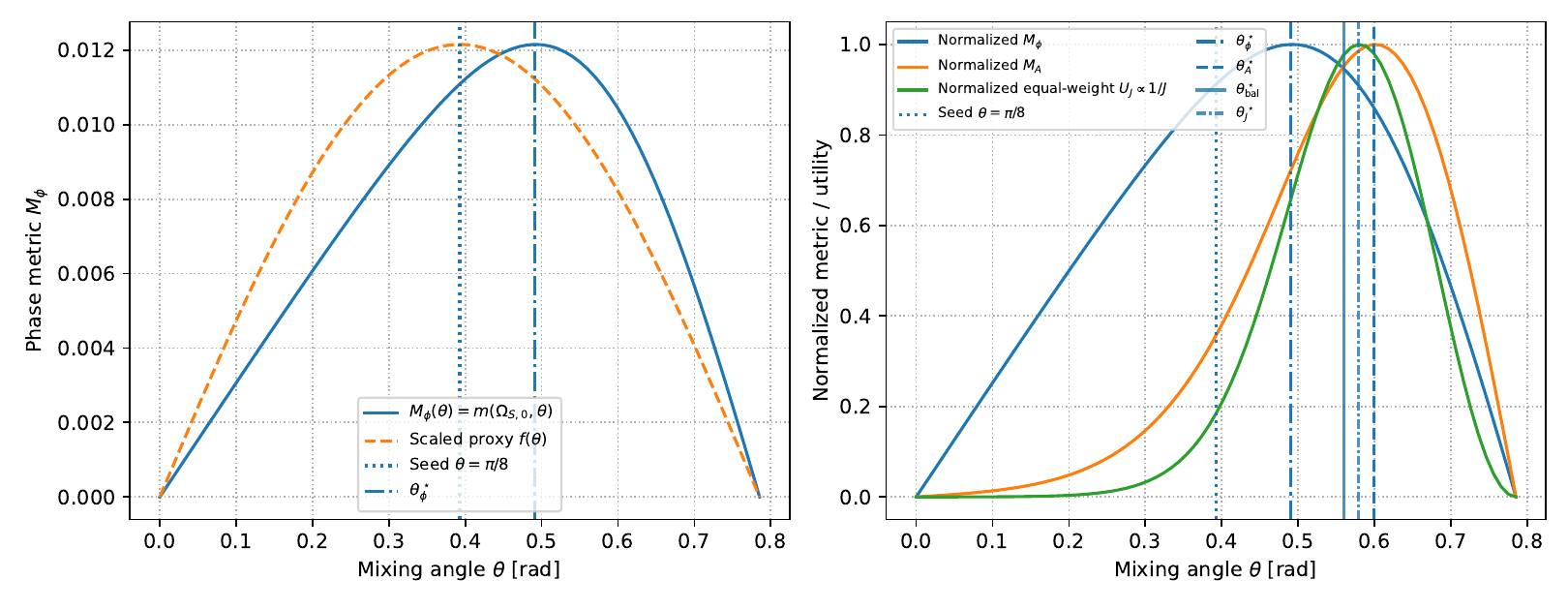}
\caption{Uniform-bias design landscape at the nominal design level $\Omega_{S,0}$. Panel (a) compares the full first-harmonic phase metric $M_{\phi}(\theta)=m(\Omega_{S,0},\theta)$ with the scaled perturbative proxy $f(\theta)$, showing that the full-model phase optimum is shifted away from the perturbative seed. Panel (b) overlays the normalized phase metric $M_{\phi}$, amplitude metric $M_A$, and a representative equal-weight joint utility derived from $J(\theta)$. The crossing of the normalized phase and amplitude curves indicates the balanced operating point $\theta_{\mathrm{bal}}^{\star}$, which lies between the separate phase- and amplitude-optimal angles.}
\label{fig:num_design_landscape}
\end{figure}

\subsection{Uniform-bias RMSE versus harmonic SNR}
\label{subsec:numerical_rmse_uniform}

To validate the high-SNR laws of Section~\ref{sec:performance}, we adopt the harmonic-domain observation model
\begin{equation}
\widehat{\rho}_{21}^{(1)}
=
P_{21}^{(1)}e^{i\Phi_S}+\eta,
\qquad
\eta\sim\mathcal{CN}(0,2\sigma^2),
\qquad
\SNR_1=\frac{|P_{21}^{(1)}|^2}{2\sigma^2}.
\label{eq:mc_model}
\end{equation}
For each SNR value, we draw $\Phi_S\sim\mathrm{Unif}[0,2\pi)$, add independent complex Gaussian noise, apply the phase estimator~\eqref{eq:phase_estimator}, and recover the amplitude by inverting the first-harmonic response map on its injective branch.

The results are shown in Fig.~\ref{fig:num_rmse_homog}. The phase RMSE follows the high-SNR law~\eqref{eq:phase_rmse} closely across the plotted range. The amplitude behavior remains distinct, but it is now substantially improved relative to the seed-angle benchmark because the balanced operating point removes much of the original slope penalty associated with $\theta=\pi/8$. At the present nominal point, the remaining offset between the phase and amplitude curves is explained by the fact that the local sensitivity is still below unity, with $|s(\Omega_{S,0})|\approx 0.65$, so the amplitude curve continues to follow the sensitivity-corrected law~\eqref{eq:amp_rmse} rather than the proportional-response limit.

\begin{figure}[t]
\centering
\includegraphics[width=0.90\linewidth]{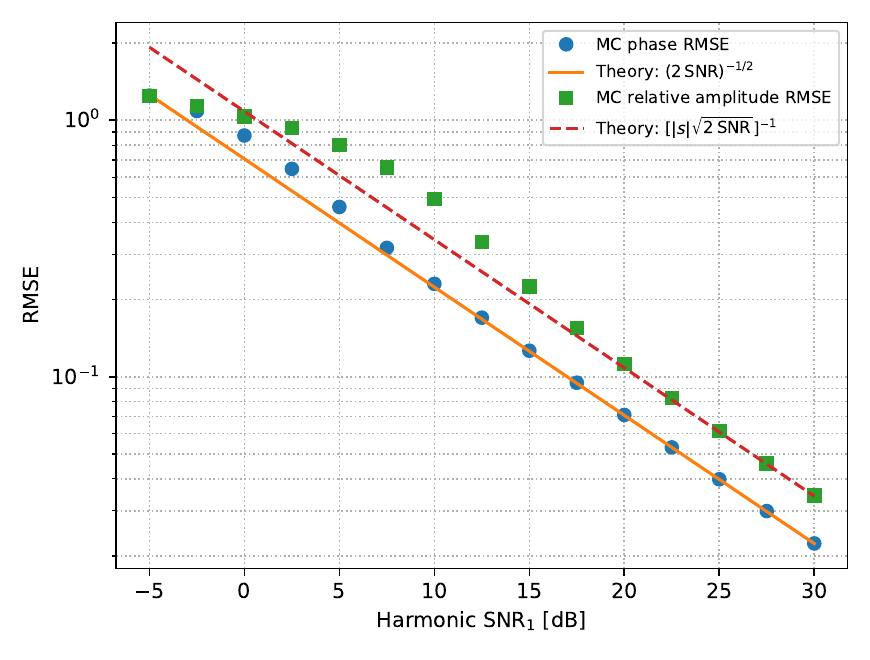}
\caption{Uniform-bias estimator performance versus harmonic SNR. Markers show Monte-Carlo RMSE for the phase and relative amplitude estimates under the observation model~\eqref{eq:mc_model}; curves show the high-SNR predictions~\eqref{eq:phase_rmse} and~\eqref{eq:amp_rmse}. The remaining amplitude penalty at fixed harmonic SNR is the direct consequence of a local response slope below unity at the balanced nominal operating point.}
\label{fig:num_rmse_homog}
\end{figure}

\subsection{Nonuniform-bias effects}
\label{subsec:numerical_nonuniform}

We now turn to the quasistatic nonuniform-bias model of Section~\ref{sec:bias_cases}. Following the numerical generator, spatial nonuniformity is parameterized by $\beta=\tan(2\theta)$, and the relative spread $\sigma_{\beta}/\beta_0$ is used as the dimensionless nonuniformity level. The numerical results below use three representative cases,
\[
\sigma_{\beta}/\beta_0=0.01,\ 0.02,\ 0.05,
\]
and compare them against the uniform-bias reference evaluated at the balanced nominal point $\theta_0=\theta_{\mathrm{bal}}^{\star}(\Omega_{S,0})$.

For the three representative nonuniformity levels, the corresponding coherent-gain factors are
\[
G\approx 0.32,\ 0.22,\ 0.10
\qquad\text{for}\qquad
\sigma_{\beta}/\beta_0=0.01,\ 0.02,\ 0.05,
\]
respectively. Figure~\ref{fig:num_inhomog_collapse} shows how this coherent averaging enters the phase and amplitude estimators. Panels~(a) and~(b) confirm the phase prediction of Section~\ref{sec:bias_cases}: when phase RMSE is plotted against the baseline uniform-reference SNR, the curves separate strongly, but they collapse when the horizontal axis is changed to the coherent-gain-corrected quantity
\[
\SNR_{1,\mathrm{eff}}=G^2\SNR_{1,0}.
\]
This is the numerical signature of the fact that spatial averaging preserves the exact phase law while reducing only the coherent harmonic gain.

Amplitude recovery is more subtle. In the present parameter set, the three nonuniform cases all yield a larger effective logarithmic sensitivity at the design point,
\[
|s_{\mathrm{avg}}(\Omega_{S,0})|\approx 0.93,
\]
compared with the uniform balanced-reference value
\[
|s(\Omega_{S,0})|\approx 0.65.
\]
Thus, for these examples, nonuniform bias reduces coherent gain but simultaneously pushes the effective response map onto a steeper local branch. Panels~(c) and~(d) resolve this competition: the raw amplitude curves do not collapse on the baseline SNR axis, but they approach a common high-SNR behavior when replotted against the sensitivity-adjusted axis
\[
\SNR_{A,\mathrm{eff}}=|s_{\mathrm{avg}}(\Omega_{S,0})|^2\,\SNR_{1,\mathrm{eff}},
\]
exactly as predicted by~\eqref{eq:snr_A_eff_theory}. The remaining spread at low SNR is the expected breakdown of the strictly local linearization.

\begin{figure}[t]
\centering
\begin{minipage}{0.49\linewidth}
\centering
\includegraphics[width=\linewidth]{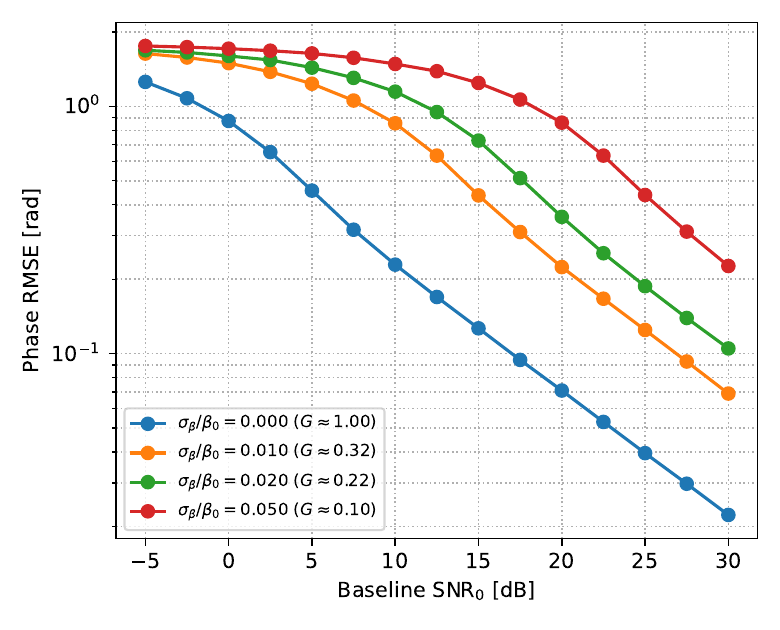}\\[-0.5em]
\textbf{(a)} Phase RMSE versus baseline $\SNR_{1,0}$.
\end{minipage}
\hfill
\begin{minipage}{0.49\linewidth}
\centering
\includegraphics[width=\linewidth]{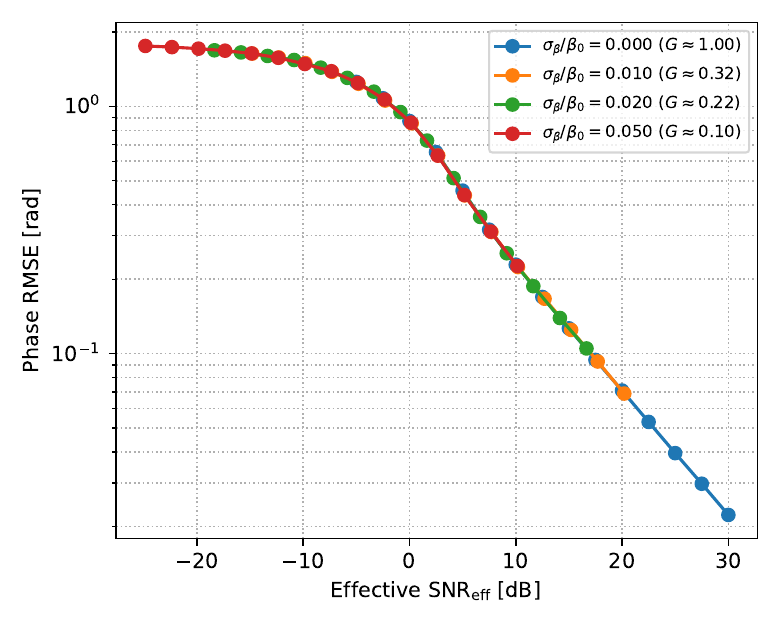}\\[-0.5em]
\textbf{(b)} Phase RMSE versus $\SNR_{1,\mathrm{eff}}$.
\end{minipage}

\vspace{0.6em}

\begin{minipage}{0.49\linewidth}
\centering
\includegraphics[width=\linewidth]{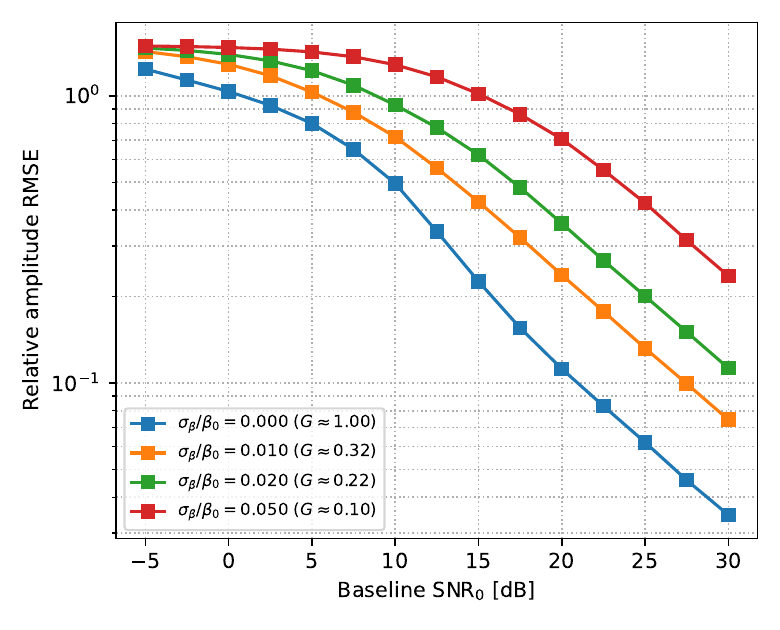}\\[-0.5em]
\textbf{(c)} Relative amplitude RMSE versus baseline $\SNR_{1,0}$.
\end{minipage}
\hfill
\begin{minipage}{0.49\linewidth}
\centering
\includegraphics[width=\linewidth]{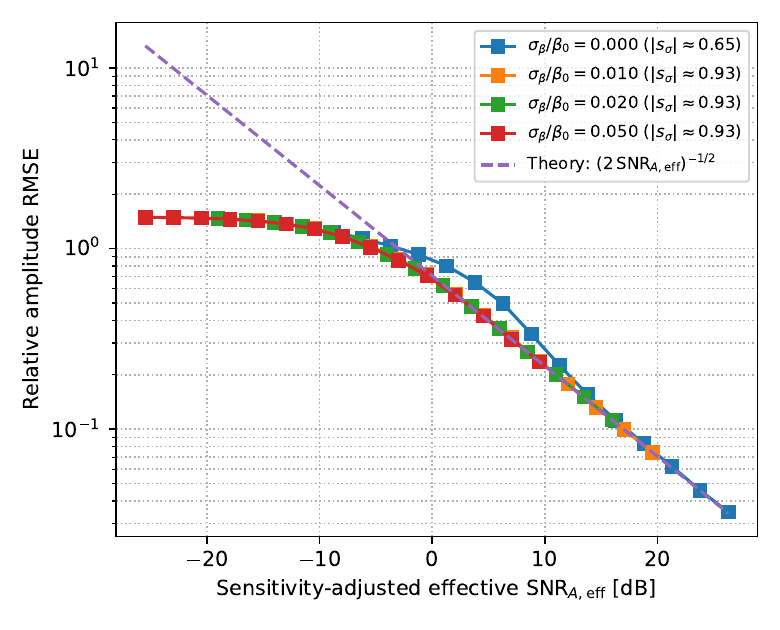}\\[-0.5em]
\textbf{(d)} Relative amplitude RMSE versus $\SNR_{A,\mathrm{eff}}$.
\end{minipage}

\caption{Collapse laws under nonuniform bias around the balanced nominal operating point. When plotted against the baseline uniform-reference SNR, different nonuniformity levels separate the RMSE curves strongly. For phase, the curves collapse when replotted against the coherent-gain-corrected axis $\SNR_{1,\mathrm{eff}}=G^2\SNR_{1,0}$. For amplitude, a comparable high-SNR collapse appears only after the additional slope correction $\SNR_{A,\mathrm{eff}}=|s_{\mathrm{avg}}|^2\SNR_{1,\mathrm{eff}}$ is included.}
\label{fig:num_inhomog_collapse}
\end{figure}

Taken together, these numerical results support the theoretical interpretation of the paper. The first harmonic carries the received-signal phase exactly, its magnitude provides an invertible amplitude-response map on a usable operating branch, the full model separates phase-, amplitude-, balanced-, and joint-design operating points under uniform bias, and nonuniform bias acts through coherent-gain reduction for phase but through both gain and slope for amplitude.

\section{Discussion}
\label{sec:discussion}

The central conceptual result of this paper is a change in the resource trade-off for phase-sensitive Rydberg reception. In existing approaches, sensitivity to the carrier phase of a received RF field is typically obtained by injecting an RF local oscillator into the atoms or by adding extra time-varying optical/RF reference tones. Here, the missing loop-closure resource is instead supplied by a \emph{static} DC bias. By Stark-mixing a near-degenerate upper pair, the bias activates a second optical upper leg and thereby enables phase-sensitive reception of a single received RF field without an auxiliary RF LO in the atoms. The price of that simplification is shifted elsewhere: performance becomes strongly tied to DC-bias control, spatial homogeneity, and calibration of the harmonic response map.

It is also important to separate what is structurally exact from what is model dependent. The exact harmonic phase law in Section~\ref{sec:pss} is robust at the level of system structure: as long as the unknown signal phase enters only through the combination $(\omega_S t+\Phi_S)$ and the dissipator is time independent, each harmonic acquires the factor $e^{in\Phi_S}$. By contrast, the harmonic magnitudes, amplitude-response maps, high-SNR prefactors, and optimal mixing angles are not universal. They depend on the reduced couplings, detunings, linewidths, and operating point, and they will in general shift when the minimal four-level model is replaced by a more complete manifold-level description. In the same spirit, the present framework addresses amplitude and carrier-phase recovery at a known or separately tracked carrier frequency $\omega_S$; it is not a theory of blind carrier acquisition.

From a practical viewpoint, the main design lesson is that the perturbative rule $\theta=\pi/8$ should be viewed only as an initialization heuristic. Actual operation should be selected from the full-model criteria $\theta_{\phi}^{\star}$, $\theta_{A}^{\star}$, or, when a compromise is required, either the weighted joint-design criterion $\theta_{J}^{\star}$ or the balanced criterion $\theta_{\mathrm{bal}}^{\star}$ that equalizes the fractional phase and amplitude penalties. In the numerical examples, we adopt $\theta_{\mathrm{bal}}^{\star}$ as the default nominal operating point for precisely that reason. The nonuniform-bias analysis further shows that phase and amplitude recovery are affected in different ways. Phase estimation degrades primarily through coherent averaging loss, summarized by the coherent-gain factor $G$, whereas amplitude estimation depends on both coherent gain and the reshaping of the effective response-map slope. This means that realistic calibration under experimentally relevant bias maps is not merely a secondary correction; it is part of the receiver design problem itself.

The model remains deliberately minimal. It relies on an isolated-pair Stark approximation and neglects additional nearby Stark states, bias-induced mixing with lower-lying ladder states (whose much larger energy separations make their Stark admixture subleading in the minimal model), and magnetic-sublevel structure (degenerate here in the absence of an applied magnetic field). It also omits Doppler and transit effects, treats bias nonuniformity as quasistatic rather than time fluctuating, and does not yet identify a species-specific implementation. These are natural next steps rather than conceptual obstacles. One direction is manifold-level atomic design: selecting concrete states, polarizations, and bias conditions that realize the reduced couplings used here. The other is device-level modeling and measurement in realistic vapor-cell geometries, including electrode design and field mapping, to determine how closely a practical receiver can approach the uniform-bias benchmark established in this work.

\section{Conclusion}
\label{sec:conclusion}

This paper establishes a theoretical route to recovering the amplitude and carrier phase of a single received RF field with a Rydberg-atom receiver without injecting an RF local oscillator into the atoms. The enabling mechanism is a static DC bias that Stark-mixes a near-degenerate upper pair, activates a second optical upper leg, and closes a phase-sensitive loop within a receiver driven only by the probe field, the coupling field, and the received RF signal itself. Under the uniform-bias benchmark, this leads to a periodically driven reduced model whose steady-state harmonics obey an exact phase law, so that the carrier phase of the received field is transferred directly to probe harmonics.

That structure yields direct phase and amplitude estimators from a demodulated probe harmonic, explicit high-SNR error laws, and distinct phase-optimal, amplitude-optimal, weighted joint-design, and balanced-compromise mixing angles. The analysis also shows how spatial bias nonuniformity modifies practical performance: it preserves the multiplicative phase law, but reduces coherent harmonic gain and reshapes the effective amplitude-response slope. Taken together, these results provide a minimal analytical foundation for implementation-specific design and experimental realization of LO-free phase-sensitive Rydberg receivers at a known or separately tracked carrier frequency.

\appendices

\section{Derivation of the Stark-mixed rotating-frame Hamiltonian}
\label{app:stark_derivation}

This appendix derives the reduced Hamiltonian used in the main text. The construction assumes: (i) the isolated-pair Stark approximation for the near-degenerate upper bare pair $\{\ket{3},\ket{4}\}$, (ii) retention of only the dominant probe, coupling, and signal dipole matrix elements, (iii) the rotating-wave approximation (RWA), and (iv) a final gauge choice that makes the unknown signal phase appear only through the combination $(\omega_S t+\Phi_S)$. Throughout, it is useful to distinguish three related quantities: the physical received-field amplitude $A_S$, the underlying signal Rabi scale $\Omega_S$ defined in~\eqref{eq:rabi_scales}, and the effective Stark-basis coupling $\OmThreeFour$ that appears in the reduced Hamiltonian.

\subsection{Bare Hamiltonian and isolated-pair Stark mixing}

Let $\{\ket{n}\}_{n=1}^{4}$ denote field-free eigenstates with $\braket{n|m}=\delta_{nm}$. The bare Hamiltonian is
\begin{equation}
\Hatom=\hbar\sum_{n=1}^{4}\omega_n\ket{n}\!\bra{n}.
\label{eq:app_Hatom}
\end{equation}
A static bias field $E_z$ along $z$ adds the Stark interaction
\begin{equation}
H_0'=\Hatom-\hat{\mu}_z E_z,\qquad \hat{\mu}_z=\hat{\boldsymbol{\mu}}\!\cdot\!\mathbf{e}_z.
\label{eq:app_stark_hamiltonian}
\end{equation}
For field-free eigenstates of definite parity, the diagonal Stark matrix elements vanish, $\bra{n}\hat{\mu}_z\ket{n}=0$ \cite{Gallagher1994Rydberg}.

In the minimal model, only the upper pair $\{\ket{3},\ket{4}\}$ is mixed, so the relevant block is
\begin{equation}
H'_{34}=
\begin{bmatrix}
\hbar\omega_3 & -E_z\mu_{34}^{z} \\
-E_z\mu_{43}^{z} & \hbar\omega_4
\end{bmatrix},
\qquad
\mu_{43}^{z}=\mu_{34}^{z*}.
\label{eq:app_H34_block}
\end{equation}
Choosing phases so that $\mu_{34}^{z}$ is real and positive, diagonalization of the upper $2\times2$ Stark block $H'_{34}$ yields the Stark states
\begin{align}
\ket{3^{(S)}} &= \cos\theta\,\ket{3}+\sin\theta\,\ket{4}, \nonumber\\
\ket{4^{(S)}} &= -\sin\theta\,\ket{3}+\cos\theta\,\ket{4},
\label{eq:app_stark_states}
\end{align}

with bias-controlled Stark-mixing angle
\begin{equation}
\tan(2\theta)=\beta,\qquad
\beta\equiv \frac{2E_z|\mu_{34}^{z}|}{\hbar\Delta_{34}},\qquad
\Delta_{34}\equiv \omega_3-\omega_4>0,
\label{eq:app_theta_def}
\end{equation}
where $0\le \theta\le \pi/4$. The corresponding dressed splitting is
\begin{equation}
\omega_{34}^{(S)}\equiv \omega_3^{(S)}-\omega_4^{(S)}
=
\sqrt{(\omega_3-\omega_4)^2+\left(\frac{2E_z|\mu_{34}^{z}|}{\hbar}\right)^2}.
\label{eq:app_dressed_splitting}
\end{equation}

\subsection{Projected dipole operators in the Stark basis}

Introduce the polarization-projected dipole operators
\[
\hat{\mu}_p=\hat{\boldsymbol{\mu}}\!\cdot\!\mathbf{e}_p,\qquad
\hat{\mu}_c=\hat{\boldsymbol{\mu}}\!\cdot\!\mathbf{e}_c,\qquad
\hat{\mu}_S=\hat{\boldsymbol{\mu}}\!\cdot\!\mathbf{e}_S.
\]
In the minimal constructive model we retain only the bare matrix elements associated with the transitions $\ket{1}\leftrightarrow\ket{2}$, $\ket{2}\leftrightarrow\ket{3}$, and $\ket{3}\leftrightarrow\ket{4}$, and neglect all others:
\begin{align}
\hat{\mu}_p &\approx \mu_{12}^{p}\ket{1}\!\bra{2}+\mu_{21}^{p}\ket{2}\!\bra{1},\nonumber\\
\hat{\mu}_c &\approx \mu_{23}^{c}\ket{2}\!\bra{3}+\mu_{32}^{c}\ket{3}\!\bra{2},\nonumber\\
\hat{\mu}_S &\approx \mu_{34}^{S}\ket{3}\!\bra{4}+\mu_{43}^{S}\ket{4}\!\bra{3},
\label{eq:app_projected_dipoles}
\end{align}

Because the Stark mixing acts only in the $\{\ket{3},\ket{4}\}$ subspace, $\hat{\mu}_p$ is unchanged by the transformation.

Transforming the dipole operator $\hat{\mu}_c$ into the Stark basis gives
\begin{equation}
\hat{\mu}_c^{(S)}
=
\mu_{23}^{c}\cos\theta\,\ket{2}\!\bra{3^{(S)}}
-\mu_{23}^{c}\sin\theta\,\ket{2}\!\bra{4^{(S)}}
+\mathrm{H.c.}
\label{eq:app_mu_c_stark}
\end{equation}
The minus sign is a basis-convention artifact. By absorbing constant phases and signs into the definition of the Stark basis, the corresponding effective couplings in the main text may be taken real and nonnegative:
\[
\OmTwoThree=\Omega_c\cos\theta,\qquad
\OmTwoFour=\Omega_c\sin\theta.
\]

The transformed signal dipole operator $\hat{\mu}_S^{(S)}$ contains both off-diagonal and diagonal dressed-state terms. Choosing phases so that $\mu_{34}^{S}$ is real, one finds
\begin{equation}
\hat{\mu}_S^{(S)}
=
\mu_{34}^{S}\cos(2\theta)\Big(\ket{3^{(S)}}\!\bra{4^{(S)}}+\ket{4^{(S)}}\!\bra{3^{(S)}}\Big)
+
\mu_{34}^{S}\sin(2\theta)\Big(\ket{3^{(S)}}\!\bra{3^{(S)}}-\ket{4^{(S)}}\!\bra{4^{(S)}}\Big).
\label{eq:app_mu_s_stark}
\end{equation}
The off-diagonal term defines the effective Stark-basis coupling
\begin{equation}
\OmThreeFour=\Omega_S\cos(2\theta),
\label{eq:app_effective_signal_coupling}
\end{equation}
where
\[
\Omega_S=\frac{|\mu_{34}^{S}|A_S}{2\hbar}
\]
is the underlying signal Rabi scale associated directly with the received RF amplitude $A_S$. The diagonal term in~\eqref{eq:app_mu_s_stark} produces an oscillatory modulation of the Stark-state energies at $\omega_S$. In the minimal receiver model, this term is neglected as an additional simplifying approximation, and only the near-resonant off-diagonal coupling responsible for loop closure is retained. If retained, the diagonal modulation would add time-periodic diagonal Fourier blocks to the Hamiltonian but would not alter the structural phase-law result so long as the phase dependence still enters only through $(\omega_S t+\Phi_S)$.

\subsection{Rotating-wave approximation}

The time-dependent interaction Hamiltonian is
\begin{equation}
H_1(t)
=
-A_p\cos(\omega_p t)\hat{\mu}_p
-A_c\cos(\omega_c t)\hat{\mu}_c
-A_S\cos(\omega_S t+\Phi_S)\hat{\mu}_S.
\label{eq:app_H1t}
\end{equation}
After transforming into the Stark basis, discarding the diagonal modulation term in~\eqref{eq:app_mu_s_stark} as part of the minimal-model approximation, retaining the near-resonant off-diagonal couplings, and applying the RWA, one obtains
\begin{align}
H_{\mathrm{RWA}}(t)
&\approx
\hbar\Omega_p e^{-i\omega_p t}\ket{1}\!\bra{2}
+
\hbar\Omega_c\cos\theta\,e^{-i\omega_c t}\ket{2}\!\bra{3^{(S)}}
\nonumber\\
&\quad
+
\hbar\Omega_c\sin\theta\,e^{-i\omega_c t}\ket{2}\!\bra{4^{(S)}}
+
\hbar\Omega_S\cos(2\theta)\,e^{-i(\omega_S t+\Phi_S)}\ket{3^{(S)}}\!\bra{4^{(S)}}
+\mathrm{H.c.}
\label{eq:app_HRWA}
\end{align}
Here the minus sign multiplying the $\ket{2}\!\bra{4^{(S)}}$ term in~\eqref{eq:app_mu_c_stark} has been absorbed into the phase convention of $\ket{4^{(S)}}$, so the effective couplings $\OmTwoThree$ and $\OmTwoFour$ may be taken real and nonnegative.

\subsection{Co-rotating frame and gauge placement of the signal phase}

A convenient first rotating frame is generated by
\begin{equation}
U_0(t)=
\exp\!\Big[
-i\omega_p t\,\ket{2}\!\bra{2}
-i(\omega_p+\omega_c)t\,\ket{3^{(S)}}\!\bra{3^{(S)}}
-i(\omega_p+\omega_c)t\,\ket{4^{(S)}}\!\bra{4^{(S)}}
\Big].
\label{eq:app_U0}
\end{equation}
The corresponding Hamiltonian
\[
H_{\mathrm{rot}}^{(0)}=U_0(H_0'+H_{\mathrm{RWA}})U_0^\dagger+i\hbar\,\dot U_0 U_0^\dagger
\]
is, in angular-frequency units,
\begin{equation}
\frac{H_{\mathrm{rot}}^{(0)}}{\hbar}=
\begin{bmatrix}
0 & \Omega_p & 0 & 0\\
\Omega_p & -\Delta_p & \OmTwoThree & \OmTwoFour\\
0 & \OmTwoThree & -(\Delta_p+\Delta_c) & \OmThreeFour e^{-i(\omega_S t+\Phi_S)}\\
0 & \OmTwoFour & \OmThreeFour e^{i(\omega_S t+\Phi_S)} & -(\Delta_p+\Delta_c+\omega_{34}^{(S)})
\end{bmatrix},
\label{eq:app_Hrot0}
\end{equation}
with
\[
\Delta_p=\omega_p-(\omega_2-\omega_1),\qquad
\Delta_c=\omega_c-(\omega_3^{(S)}-\omega_2).
\]

In~\eqref{eq:app_Hrot0}, the received RF field is represented in the most direct way: it physically drives the Stark-state transition $\ket{3^{(S)}}\leftrightarrow\ket{4^{(S)}}$ and carries the explicit factor $e^{\pm i(\omega_S t+\Phi_S)}$. To obtain the form used in the main text, apply the additional diagonal transformation
\begin{equation}
W(t)=\exp\!\Big[-i(\omega_S t+\Phi_S)\ket{4^{(S)}}\!\bra{4^{(S)}}\Big].
\label{eq:app_W}
\end{equation}
Then
\[
\Heff = W\left(\frac{H_{\mathrm{rot}}^{(0)}}{\hbar}\right)W^\dagger + i\,\dot W W^\dagger,
\]
which gives
\begin{equation}
\Heff(t,\Phi_S)=
\begin{bmatrix}
0 & \Omega_p & 0 & 0\\
\Omega_p & -\Delta_p & \OmTwoThree & \OmTwoFour e^{i(\omega_S t+\Phi_S)}\\
0 & \OmTwoThree & -(\Delta_p+\Delta_c) & \OmThreeFour\\
0 & \OmTwoFour e^{-i(\omega_S t+\Phi_S)} & \OmThreeFour & -(\Delta_p+\Delta_c-\Delta_S)
\end{bmatrix},
\label{eq:app_H_standard}
\end{equation}
where
\begin{equation}
\Delta_S=\omega_S-\omega_{34}^{(S)}.
\label{eq:app_detuning_signal}
\end{equation}
Equation~\eqref{eq:app_H_standard} is the full matrix form of the reduced Hamiltonian used in the paper; it is equivalent to the decomposition
\[
\Heff(t,\Phi_S)=\Hzero+\Hplus e^{i(\omega_S t+\Phi_S)}+\Hminus e^{-i(\omega_S t+\Phi_S)}.
\]

The transformation~\eqref{eq:app_W} is a gauge choice. It does not reassign which field physically drives which transition; it only relocates the explicit phase factor within the loop. In the representation~\eqref{eq:app_H_standard}, the received-signal amplitude enters through the static coupling $\OmThreeFour$, while the received-signal phase enters only through the combination $(\omega_S t+\Phi_S)$. The physical Hamiltonian corresponding to~\eqref{eq:app_H_standard} is $\hbar\Heff$.

\section{Floquet--Liouville formulation and block-tridiagonal linear system}
\label{app:floquet_linear_system}

This appendix records a convenient Liouville-space form of the truncated harmonic-balance problem. Because of the exact harmonic phase law proved in Section~\ref{sec:pss}, only the reference-phase coefficients
\[
P^{(n)}=\rho^{(n)}(0)
\]
need to be solved numerically.

\subsection{Vectorization identities and Liouvillian blocks}

For a $4\times4$ matrix $X$, define $\mathrm{vec}(X)$ by column stacking. Then
\begin{equation}
\mathrm{vec}(AXB)=(B^{\mathrm{T}}\otimes A)\,\mathrm{vec}(X),
\label{eq:app_vec_identity}
\end{equation}
which implies
\begin{align}
\mathrm{vec}([H,X]) &= (I_4\otimes H-H^{\mathrm{T}}\otimes I_4)\,\mathrm{vec}(X), \label{eq:app_comm_vec}\\
\mathrm{vec}(LXL^\dagger) &= (L^*\otimes L)\,\mathrm{vec}(X), \label{eq:app_jump_vec}\\
\mathrm{vec}(\{G,X\}) &= (I_4\otimes G+G^{\mathrm{T}}\otimes I_4)\,\mathrm{vec}(X). \label{eq:app_anti_vec}
\end{align}

Let
\[
p^{(n)}\triangleq \mathrm{vec}\!\left(P^{(n)}\right)\in\mathbb{C}^{16}.
\]
The dissipator in vectorized form is
\begin{equation}
\mathbb{D}
=
\sum_k
\left[
L_k^*\otimes L_k
-\frac{1}{2}\Big(
I_4\otimes L_k^\dagger L_k
+
(L_k^\dagger L_k)^{\mathrm{T}}\otimes I_4
\Big)
\right].
\label{eq:app_D_super}
\end{equation}
Define the three harmonic Liouvillian blocks
\begin{align}
\mathbb{L}_0
&=
-i\Big(I_4\otimes \Hzero-\Hzero^{\mathrm{T}}\otimes I_4\Big)+\mathbb{D},
\label{eq:app_L0_super}
\\
\mathbb{L}_+
&=
-i\Big(I_4\otimes \Hplus-\Hplus^{\mathrm{T}}\otimes I_4\Big),
\label{eq:app_Lplus_super}
\\
\mathbb{L}_-
&=
-i\Big(I_4\otimes \Hminus-\Hminus^{\mathrm{T}}\otimes I_4\Big).
\label{eq:app_Lminus_super}
\end{align}
Then the harmonic-balance equations~\eqref{eq:harmonic_balance_eq} become
\begin{equation}
\Big(in\omega_S I_{16}-\mathbb{L}_0\Big)p^{(n)}
-\mathbb{L}_+\,p^{(n-1)}
-\mathbb{L}_-\,p^{(n+1)}
=0,
\qquad n\in\mathbb{Z}.
\label{eq:app_harmonic_balance_vec}
\end{equation}
The nearest-neighbor structure in harmonic index follows directly from the fact that the Hamiltonian contains only Fourier components at $0$ and $\pm1$.

\subsection{Global truncated system and trace constraints}

Truncating the harmonics to $n\in[-N,N]$ gives the stacked unknown
\begin{equation}
\vec{\mathcal{P}}
=
\Big(
p^{(-N)},\,p^{(-N+1)},\,\ldots,\,p^{(N)}
\Big)^{\mathrm{T}}
\in \mathbb{C}^{16(2N+1)}.
\label{eq:app_global_stack}
\end{equation}
With
\[
\mathbb{A}_n \triangleq in\omega_S I_{16}-\mathbb{L}_0,\qquad
\mathbb{B}_- \triangleq -\mathbb{L}_+,\qquad
\mathbb{B}_+ \triangleq -\mathbb{L}_-,
\]
the truncated system has the block-tridiagonal form
\begin{equation}
\begin{bmatrix}
\mathbb{A}_{-N} & \mathbb{B}_+ & 0 & \cdots & 0\\
\mathbb{B}_- & \mathbb{A}_{-N+1} & \mathbb{B}_+ & \ddots & \vdots\\
0 & \ddots & \ddots & \ddots & 0\\
\vdots & \ddots & \mathbb{B}_- & \mathbb{A}_{N-1} & \mathbb{B}_+\\
0 & \cdots & 0 & \mathbb{B}_- & \mathbb{A}_{N}
\end{bmatrix}
\vec{\mathcal{P}}
=
0.
\label{eq:app_block_tridiag}
\end{equation}

The physical trace constraints are
\begin{equation}
\mathrm{Tr}\,P^{(0)}=1,\qquad
\mathrm{Tr}\,P^{(n)}=0\quad (n\neq 0).
\label{eq:app_trace_constraints}
\end{equation}
If
\[
\tau^{\mathrm{T}}\triangleq \mathrm{vec}(I_4)^{\mathrm{T}},
\]
then these constraints become
\[
\tau^{\mathrm{T}}p^{(0)}=1,\qquad
\tau^{\mathrm{T}}p^{(n)}=0\quad (n\neq 0).
\]
In practice, one redundant equation per harmonic block is replaced by the corresponding trace constraint. Solving the constrained truncated system yields the reference-phase coefficients $P^{(n)}$, from which the coefficients at arbitrary $\Phi_S$ follow immediately via the exact harmonic phase law~\eqref{eq:factorization}. Convergence with harmonic truncation is checked a posteriori; numerical evidence for the production choice $N=3$ is given in Appendix~\ref{app:numerics_validation}.

\section{Perturbative scaling and initialization rule}
\label{app:perturbative_scaling}

This appendix derives the weak-coupling angular scaling that motivates the perturbative initialization rule for the Stark-mixing angle. The result is an initialization heuristic for the full Floquet--Liouville search, not a universal optimum theorem.

\subsection{Leading-order scaling of the first harmonic}

In the weak-signal, weak-loop-closure regime, the first-harmonic probe response is generated by the smallest process that couples the $n=0$ and $n=\pm1$ harmonic sectors while returning to the probe coherence. To lowest nonvanishing order, this requires one probe interaction, two upper-leg optical interactions, and one signal-mediated Stark-state coupling. Consequently,
\begin{equation}
|P_{21}^{(1)}|
\propto
\Omega_p\,\OmTwoThree\,\OmTwoFour\,\OmThreeFour
\times
\mathcal{F}(\Delta_p,\Delta_c,\{\gamma\}),
\label{eq:app_perturbative_general}
\end{equation}
where $\mathcal{F}$ collects the detuning and linewidth dependence.

Substituting the constructive-model couplings
\[
\OmTwoThree=\Omega_c\cos\theta,\qquad
\OmTwoFour=\Omega_c\sin\theta,\qquad
\OmThreeFour=\Omega_S\cos(2\theta)
\]
gives
\begin{equation}
|P_{21}^{(1)}|
\propto
\Omega_p\,\Omega_c^2\,\Omega_S\,
f(\theta)\,
\mathcal{F}(\Delta_p,\Delta_c,\{\gamma\}),
\qquad
f(\theta)=\sin\theta\,\cos\theta\,\cos(2\theta).
\label{eq:app_perturbative_theta}
\end{equation}
Thus, to leading order, the first harmonic is linear in the underlying signal scale $\Omega_S$ and inherits all of its angular dependence through $f(\theta)$.

\subsection{Consequences for the phase and amplitude design metrics}

In the notation of Section~\ref{subsec:three_optima}, the perturbative form~\eqref{eq:app_perturbative_theta} implies
\begin{equation}
m(\Omega_S,\theta)\approx C\,\Omega_S\,f(\theta),
\label{eq:app_m_linear}
\end{equation}
for a constant $C$ that depends on the optical settings, detunings, and linewidths but not on $\theta$. Hence the phase metric obeys
\begin{equation}
M_{\phi}(\theta)=m(\Omega_{S,0},\theta)\approx C\,\Omega_{S,0}\,f(\theta),
\label{eq:app_Mphi_pert}
\end{equation}
while the amplitude metric becomes
\begin{equation}
M_A(\theta)
=
\Omega_{S,0}\left|\frac{\partial m}{\partial\Omega_S}(\Omega_{S,0},\theta)\right|
\approx
C\,\Omega_{S,0}\,|f(\theta)|.
\label{eq:app_MA_pert}
\end{equation}
Therefore, in the perturbative regime, the phase and amplitude criteria inherit the same angular dependence. The fact that the full model later yields distinct optima $\theta_{\phi}^{\star}\neq\theta_A^{\star}$ is consequently a genuinely non-perturbative effect.

\subsection{Perturbative initialization rule}

Using the identity
\begin{equation}
f(\theta)=\frac{1}{4}\sin(4\theta)=\frac{\beta}{2(1+\beta^2)},
\qquad
\beta=\tan(2\theta),
\label{eq:app_f_theta}
\end{equation}
one finds that $f(\theta)$ is maximized on $\theta\in[0,\pi/4]$ at
\begin{equation}
\theta_{\mathrm{seed}}=\frac{\pi}{8},
\qquad
\beta_{\mathrm{seed}}=1.
\label{eq:app_theta_seed}
\end{equation}
Using the definition of $\beta$ in~\eqref{eq:app_theta_def} then gives the corresponding seed bias amplitude
\begin{equation}
E_{z,\mathrm{seed}}=\frac{\hbar\Delta_{34}}{2|\mu_{34}^{z}|}.
\label{eq:app_Eseed}
\end{equation}
This rule should be used only to initialize the full-model search over $\theta$. Once non-perturbative saturation, detuning dependence, and linewidth effects become relevant, the practically meaningful operating points are the full-model optima $\theta_{\phi}^{\star}$, $\theta_{A}^{\star}$, and, after application weights are fixed, $\theta_{J}^{\star}$.

\section{High-SNR error laws and calibration assumptions}
\label{app:rmse_derivation}

This appendix derives the leading-order error laws used in Section~\ref{sec:performance}. The derivation assumes a calibrated harmonic-phasor model: the dominant randomness is additive complex noise on the measured demodulated harmonic, while the reference harmonic used for phase calibration and the response map used for amplitude inversion are treated as exact or as sufficiently high-SNR calibration quantities that their uncertainty is negligible at leading order.

\subsection{Calibrated phasor model}

Let
\begin{equation}
z_n=\rho_{21}^{(n)}(\Phi_S)=r_n e^{i\varphi_n},
\qquad
r_n=|P_{21}^{(n)}|,
\qquad
\varphi_n=\arg(P_{21}^{(n)})+n\Phi_S.
\label{eq:app_phasor_model}
\end{equation}
The calibrated measured harmonic is modeled as
\begin{equation}
\widehat{z}_n=z_n+\eta_n,
\qquad
\eta_n=\eta_{I,n}+i\eta_{Q,n},
\label{eq:app_noisy_phasor}
\end{equation}
with
\[
\eta_{I,n},\eta_{Q,n}\sim\mathcal{N}(0,\sigma_n^2)
\]
independent and zero mean. The harmonic SNR is therefore
\begin{equation}
\SNR_n=\frac{r_n^2}{2\sigma_n^2}.
\label{eq:app_snr}
\end{equation}

\subsection{Phase error}

For the phase estimator, the reference ratio in~\eqref{eq:phase_estimator} removes fixed readout gain and fixed phase offsets. If the reference harmonic is treated as exact, the local phase error is the phase perturbation of $\widehat{z}_n$ itself. Rotate the coordinate system so that $z_n=r_n$ lies on the positive real axis. Then
\[
\widehat{z}_n=r_n+\eta_{I,n}+i\eta_{Q,n}.
\]
At high SNR,
\[
\arg(\widehat{z}_n)\approx \frac{\eta_{Q,n}}{r_n},
\]
so the phase-estimator error is
\[
\delta\widehat{\Phi}_S\approx \frac{1}{n}\frac{\eta_{Q,n}}{r_n}.
\]
Hence
\begin{equation}
\mathrm{Var}(\widehat{\Phi}_S)\approx \frac{\sigma_n^2}{n^2r_n^2}
=\frac{1}{2n^2\SNR_n},
\qquad
\RMSE_{\Phi}\approx \frac{1}{n\sqrt{2\SNR_n}}.
\label{eq:app_phase_rmse}
\end{equation}

If the reference harmonic is itself noisy and statistically independent, its phase uncertainty adds at leading order to the variance above. More precisely, an independent reference phasor with magnitude $r_{n,\mathrm{ref}}$ and quadrature variance $\sigma_{n,\mathrm{ref}}^2$ contributes an additional term of order $\sigma_{n,\mathrm{ref}}^2/(n^2 r_{n,\mathrm{ref}}^2)$.

\subsection{Amplitude error}

For amplitude recovery, let the true signal strength be $\Omega_S$ and write the calibrated response map as
\[
m_n(\Omega_S)=|P_{21}^{(n)}(\Omega_S)|.
\]
At high SNR, the noisy magnitude satisfies
\[
|\widehat{z}_n|=|z_n+\eta_n|\approx r_n+\eta_{\parallel,n},
\]
where $\eta_{\parallel,n}$ is the noise component parallel to $z_n$, with variance $\sigma_n^2$. Linearizing the inverse response map gives
\[
\widehat{\Omega}_S-\Omega_S
\approx
\frac{\eta_{\parallel,n}}{m_n'(\Omega_S)}.
\]
Therefore
\begin{equation}
\mathrm{Var}(\widehat{\Omega}_S)\approx \frac{\sigma_n^2}{|m_n'(\Omega_S)|^2}.
\label{eq:app_amp_var}
\end{equation}
Using the logarithmic sensitivity
\[
s_n(\Omega_S)=\frac{\Omega_S}{m_n(\Omega_S)}m_n'(\Omega_S)
\]
and $m_n(\Omega_S)=r_n$ yields
\[
|m_n'(\Omega_S)|=\frac{|s_n(\Omega_S)|\,r_n}{\Omega_S}.
\]
Substituting into~\eqref{eq:app_amp_var} and using~\eqref{eq:app_snr} gives
\begin{equation}
\frac{\RMSE_{\Omega_S}}{\Omega_S}
\approx
\frac{1}{|s_n(\Omega_S)|\sqrt{2\SNR_n}}.
\label{eq:app_amp_rmse}
\end{equation}
It is convenient to define the effective amplitude SNR
\begin{equation}
\SNR_{A,n}^{\mathrm{eff}}(\Omega_S)\triangleq |s_n(\Omega_S)|^2\,\SNR_n,
\label{eq:app_amp_snr_eff}
\end{equation}
so that
\begin{equation}
\frac{\RMSE_{\Omega_S}}{\Omega_S}
\approx
\frac{1}{\sqrt{2\,\SNR_{A,n}^{\mathrm{eff}}(\Omega_S)}}.
\label{eq:app_amp_rmse_eff}
\end{equation}

\subsection{Extension to nonuniform bias and validity range}

The same local linearization applies to the nonuniform-bias case after replacing the uniform quantities $(m_n,s_n,\SNR_n)$ by the averaged quantities $(m_{\mathrm{avg}},s_{\mathrm{avg}},\SNR_{1,\mathrm{eff}})$ introduced in Section~\ref{sec:bias_cases}. This gives the averaged amplitude law used in~\eqref{eq:amp_rmse_nonuniform} and the corresponding effective amplitude SNR in~\eqref{eq:snr_A_eff_theory}.

All formulas in this appendix are local high-SNR approximations. They neglect phase wrapping, finite-branch ambiguity of the inverse response map, and higher-order curvature of the phase and amplitude estimators over the error scale of interest. Deviations from the asymptotic laws at low SNR are therefore expected and are observed numerically in Section~\ref{sec:numerics}.

\section{Numerical implementation, convergence, and validation}
\label{app:numerics_validation}

This appendix collects numerical checks and implementation details omitted from the main text. Unless otherwise noted, all frequencies are expressed in units of $\gamma_{21}$, which is set to unity.

\subsection{Harmonic truncation and convergence}

The production calculations truncate the harmonic-balance system to $n\in[-N,N]$ with $N=3$. To assess the truncation error, we compare the first-harmonic phasor computed at order $N$ with a higher-order reference solution at $N_{\mathrm{ref}}=8$ via
\begin{equation}
\varepsilon_N
\triangleq
\frac{\left|P_{21,N}^{(1)}-P_{21,N_{\mathrm{ref}}}^{(1)}\right|}
{\left|P_{21,N_{\mathrm{ref}}}^{(1)}\right|}.
\label{eq:app_rel_error}
\end{equation}
Two operating points are used in the convergence study:

\begin{itemize}
\item the nominal point from~\eqref{eq:nominal_point},
\item a more demanding stress point
\[
(\Omega_c,\Omega_S,\Delta_p,\Delta_c,\theta,\omega_S)
=
(1.6,0.25,0.10,-0.10,0.60,1.0).
\]
\end{itemize}

At the nominal point, $\varepsilon_N$ already falls below $10^{-15}$ by $N=2$. At the stress point, it drops below $10^{-7}$ by $N=3$, below $10^{-11}$ by $N=4$, and reaches machine precision by $N=5$. These checks justify the production choice $N=3$ for the parameter range studied in the main text.

\subsection{Time-domain validation and Monte-Carlo protocol}

As a second validation step, the periodic steady state reconstructed from the truncated Floquet coefficients is compared with direct time-domain integration of the master equation using a stiff BDF solver. The time-domain solver is run for a burn-in of $180$ received-signal periods, followed by an evaluation window of $6$ periods sampled at $400$ points per period. Over the tested parameter range, the direct integration and the Floquet reconstruction are visually indistinguishable in the periodic regime.

Figure~\ref{fig:app_validation} summarizes both validation checks.

\begin{figure}[t]
\centering
\includegraphics[width=0.92\linewidth]{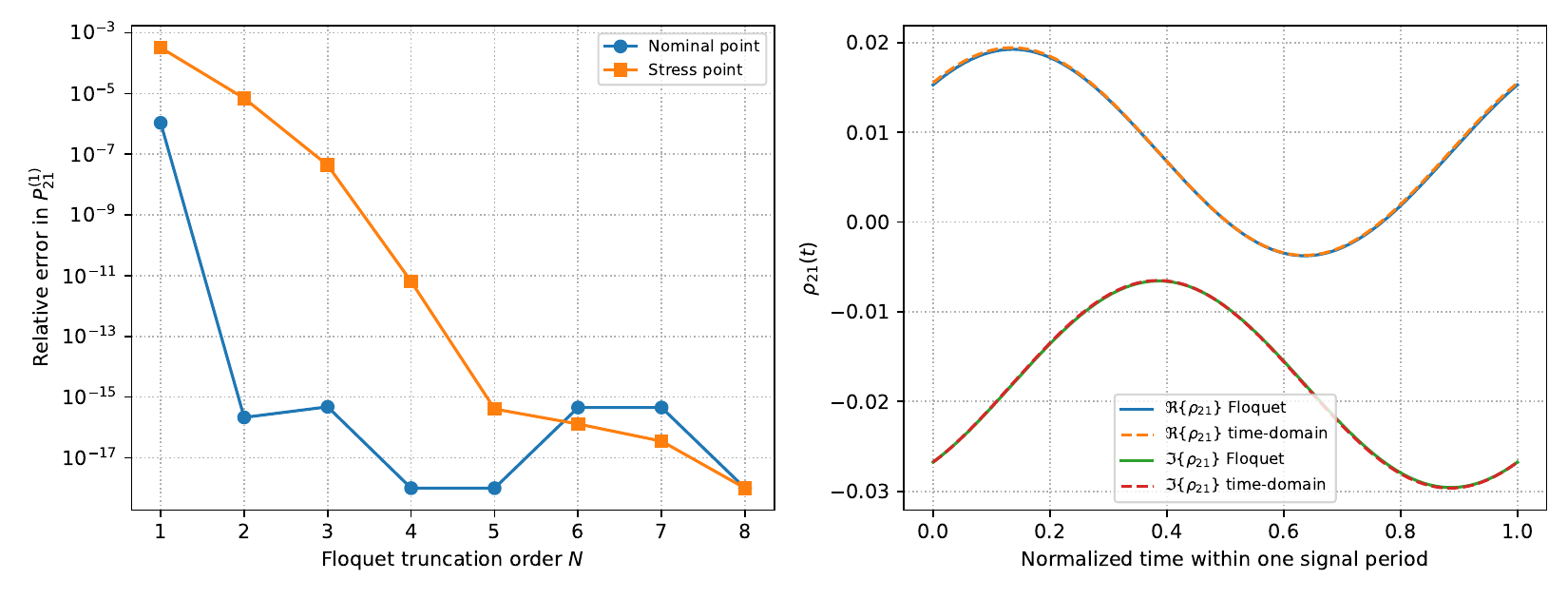}
\caption{Numerical validation of the Floquet--Liouville solver. Panel (a) shows the relative error in the first-harmonic phasor versus harmonic truncation order $N$, referenced to the higher-order solution with $N_{\mathrm{ref}}=8$, for the nominal point~\eqref{eq:nominal_point} and the stress point specified in the text. Panel (b) compares direct time-domain integration of the master equation with Floquet reconstruction of $\rho_{21}(t)$ over one received-signal period.}
\label{fig:app_validation}
\end{figure}

The Monte-Carlo RMSE curves reported in Section~\ref{sec:numerics} use $3\times 10^4$ trials per SNR point. For each trial, the signal phase is drawn uniformly from $[0,2\pi)$, circular complex Gaussian noise is added to the calibrated harmonic phasor, and the phase and amplitude estimators are applied exactly as described in Section~\ref{sec:estimators}.

\subsection{Parameterization of nonuniform bias}

For the nonuniform-bias calculations, the local bias is parameterized through
\[
\beta=\tan(2\theta),
\]
which is monotone in the Stark-mixing angle on $\theta\in[0,\pi/4]$. The numerical generator uses the relative spread $\sigma_\beta/\beta_0$ as the dimensionless measure of nonuniformity. For each chosen spread level, local first-harmonic responses $P_{21}^{(1)}(\beta)$ are computed and combined with normalized weights to approximate the spatial average in~\eqref{eq:averaged_harmonic}. The resulting coherent-gain factor is
\[
G=\frac{|\overline{P}_{21}^{(1)}|}{|P_{21}^{(1)}(\beta_0)|}.
\]

Figure~\ref{fig:app_G_vs_sigma} shows the coherent-gain factor versus the relative nonuniformity level. In the parameter regime used for the main-text collapse study, the representative values are
\[
G\approx 0.32,\ 0.22,\ 0.10
\qquad\text{for}\qquad
\sigma_\beta/\beta_0=0.01,\ 0.02,\ 0.05,
\]
respectively. The main text tests the nonuniform-bias theory through the RMSE collapse laws, while Fig.~\ref{fig:app_G_vs_sigma} provides a supporting diagnostic for the magnitude of the coherent averaging loss itself.

\begin{figure}[t]
\centering
\includegraphics[width=0.85\linewidth]{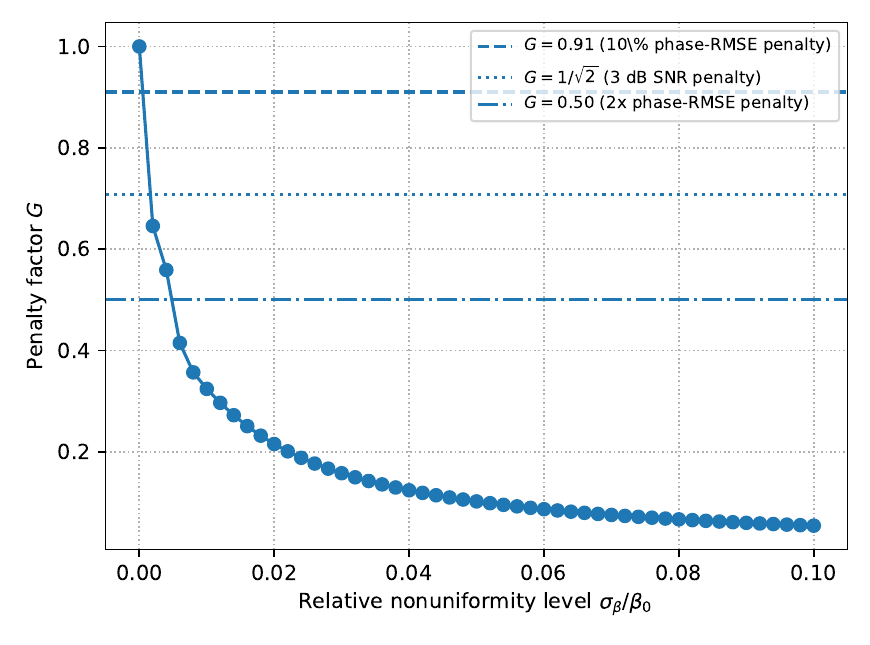}
\caption{Supporting diagnostic for nonuniform-bias calculations: coherent-gain factor $G$ versus the relative nonuniformity level $\sigma_{\beta}/\beta_0$. The horizontal guides translate $G$ into representative phase-estimation penalties relative to the corresponding uniform-bias benchmark, including a $10\%$ phase-RMSE increase, a $3$ dB phase-SNR reduction, and a twofold phase-RMSE increase.}
\label{fig:app_G_vs_sigma}
\end{figure}

\bibliographystyle{IEEEtran}
\bibliography{citations}

\end{document}